\let\epsilon\varepsilon
\def\E{\mathop{{}\mathbb E}}
\def\F{\mathbb F}
\def\O{\mathcal O}
\def\R{\mathbb R}
\def\X{\mathcal X}
\def\Z{\mathbb Z}
\def\Ft{\F_2}
\def\Fq{\F_q}
\def\Ftx{\Ft[X]}
\def\Ftxn{\Ft[X] / (X^n - 1)}
\def\Fqx{\Fq[X]}
\def\ip{\angles}
\def\set{\braces}
\def\tensor{\otimes}
\def\newdef{\emph}
\DeclareMathOperator{\Ber}{Ber}
\def\statdist{\Delta_{\text{tv}}}
\def\Pn{P_{n}}
\def\A{\bm{A}}
\def\ee{\bm{e}}
\def\m{\bm{m}}
\def\s{\bm{s}}
\theoremstyle{definition}
\newtheorem{theorem}{Theorem}[section]
\newtheorem{lemma}{Lemma}[section]
\newtheorem{corollary}{Corollary}[section]
\newcommand\swapifbranches[3]{#1{#3}{#2}}
\patchcmd{\DeclarePairedDelimiter}{\@ifstar}{\swapifbranches\@ifstar}{}{}
\DeclarePairedDelimiter\parens{\lparen}{\rparen}
\DeclarePairedDelimiter\abs{\lvert}{\rvert}
\DeclarePairedDelimiter\norm{\lVert}{\rVert}
\DeclarePairedDelimiter\weight{\lVert}{\rVert}
\DeclarePairedDelimiter\braces{\lbrace}{\rbrace}
\DeclarePairedDelimiter\bracks{\lbrack}{\rbrack}
\DeclarePairedDelimiter\angles{\langle}{\rangle}
\newcommand{\eqdef}{\stackrel{\textrm{def}}{=}}
\newcommand{\ie}{\textit{i.e.,}\xspace}
\newcommand{\bfa}{\mathbf{a}}
\newcommand{\bfe}{\mathbf{e}}
\newcommand{\bfs}{\mathbf{s}}
\newcommand{\bfA}{\mathbf{A}}
\newcommand{\bfM}{\mathbf{M}}
\newcommand\bmhead[1]{\textbf{#1.}}
\patchcmd{\bibsetup}{\interlinepenalty=5000}{\interlinepenalty=10000}{}{}
\begin{document}

\title{On the Independence Assumption in \\ Quasi-Cyclic Code-Based
  Cryptography}\author[1]{Maxime Bombar}\author[2]{Nicolas Resch}\author[2]{Emiel Wiedijk}\affil[1]{Cryptology Group, CWI, Amsterdam, The Netherlands \protect
  \\ Institut de Mathématiques de Bordeaux, France \authorcr
  \texttt{maxime.bombar@math.u-bordeaux.fr}\vspace{\baselineskip}}
\affil[2]{University of Amsterdam, The Netherlands \authorcr \texttt{\{n.a.resch,e.wiedijk\}@uva.nl}}
\maketitle

\begin{abstract}

  Cryptography based on the presumed hardness of decoding codes --
  i.e., code-based cryptography -- has recently seen increased
  interest due to its plausible security against quantum attackers.
  Notably, of the four proposals for the NIST post-quantum
  standardization process that were advanced to their fourth round for
  further review, two were code-based. The most efficient proposals --
  including HQC and BIKE, the NIST submissions alluded to above -- in
  fact rely on the presumed hardness of decoding \emph{structured}
  codes. Of particular relevance to our work, HQC is based on
  \emph{quasi-cyclic codes}, which are codes generated by matrices
  consisting of two cyclic blocks.

  In particular, the security analysis of HQC requires a precise
  understanding of the Decryption Failure Rate (DFR), whose analysis
  relies on the following heuristic: given random ``sparse'' vectors
  $e_1,e_2$ (say, each coordinate is i.i.d. Bernoulli) multiplied by
  fixed ``sparse'' quasi-cyclic matrices $A_1,A_2$, the weight of
  resulting vector $e_1A_1+e_2A_2$ is very concentrated around its
  expectation. In the documentation, the authors model the
  distribution of $e_1A_1+e_2A_2$ as a vector with \emph{independent}
  coordinates (and correct marginal distribution). However, we uncover
  cases where this modeling fails. While this does not invalidate the
  (empirically verified) heuristic that the weight of $e_1A_1+e_2A_2$
  is concentrated, it does suggest that the behavior of the noise is a
  bit more subtle than previously predicted. Lastly, we also discuss
  implications of our result for potential worst-case to average-case
  reductions for quasi-cyclic codes.
\end{abstract}

 \section{Introduction}

In light of recent calls for post-quantum secure cryptography -- i.e., cryptography that is secure in a world with quantum computers
-- code-based cryptography has recently seen a growth in interest as a
prominent candidate for quantum-safe cryptography. In particular, all
three remaining finalists in the 4th round of NIST competition are
code-based~\cite{BIKEr4,HQCr4,ABCCGLMMMNPPPSSSTW20}. Informally,
code-based cryptographic schemes are those whose security can be
reduced to the conjectured hardness of decoding linear codes under the
Hamming metric.

More concretely, the quintessential hard problem for code-based
cryptography is the decoding problem (also sometimes called
\emph{learning parity with noise (LPN)}), which asks one to recover
$\s$ from the input $(\A,\s\A+\ee)$,\footnote{Technically, we are describing
  the search version of LPN. For cryptographic purposes a decision
  variant is often required, which states that given $\A$
  distinguishing $\s\A+\ee$ from a uniformly random vector is hard.
  However, due to a search-to-decision reduction~\cite{FS96},
they are polynomially equivalent.} where $\A \gets \F_2^{n \times \m}$
and $\s \gets \F_2^n$ are uniformly distributed, and $\ee \in \F_2^m$ is
a \emph{noise} vector where each $e_i$ is an independent Bernoulli
variable (i.e., it is $1$ with probability $p$ and $0$ with
probability $1-p$).\footnote{In fact, for technical reasons, it is
  often easier to consider LPN as an \emph{oracle} problem, as we do
  later. The complexity of the two variants are polynomially related,
so they are interchangeable for our purposes.}

This problem inspired the closely related \emph{learning with errors
(LWE)} problem, which is at the core of lattice-based
cryptography. Here, the noise $\ee$ is sampled differently,
typically as a (rounding of) a Gaussian random variable (one is also required to work over a large field).

When constructing public-key cryptography from either LPN or LWE,
the matrix $\A$ always forms (a part of) the public key. Thus, one
is required to publish at least $nk$ field elements: this quadratic
lower bound on the public-key size often renders these schemes
uncompetitive in terms of efficiency. To remedy this situation, it has been proposed to instead sample \emph{structured} matrices $\A$: for such
matrices, it is still plausible (and widely believed) that (quantum)
algorithms cannot efficiently solve the relevant decoding problem;
however, their structure allows for a much more succinct
representation, ideally with only $n$ field elements. This is
precisely the approach taken by many NIST submissions~\cite{BIKEr4,HQCr4}. In
particular, these schemes are based on \emph{quasi-cyclic
codes}, which we now introduce.

\paragraph{Quasi-cyclic codes.} A quasi-cyclic code is a code that
is generated by a matrix composed of multiple blocks of circulant
submatrices, \ie matrices such that each row is a circular shift of
its first row. Consequently, each submatrix can be represented by
storing only its first row. For instance, in quasi-cyclic codes of rate $1/2$, only $2n$ field elements need to be stored.

One important advantage of quasi-cyclic codes in cryptographic
applications is their polynomial representation. Specifically, let
\(\mathcal{R} \eqdef \Fq[X]/(X^{n}-1)\). A circulant matrix of
the form
\[
  \bfM_{a} \eqdef
  \begin{pmatrix}
    a_{0} & a_{1} & \dots & \dots & a_{n-1}\\
    a_{n-1} & a_{0} & \dots & \dots & a_{n-2} \\
    \vdots & \ddots & \ddots & & \vdots \\
    \vdots & & \ddots & \ddots & \vdots \\
    a_{1} & a_{2} & \dots & a_{n-1} & a_{0}
  \end{pmatrix} \in \Fq^{n\times n}
\]
represents the endomorphism
\(P(X) \in \mathcal{R} \mapsto a(X) \cdot P(X) \in \mathcal{R}\) in
the monomial basis, where \(a(X) = \sum_{i=0}^{n-1}a_i X^i\).

For example, an instance \((\A, \s\A + \ee)\) where \(\A\) is of the
form
\[
  \bfA \eqdef
  \begin{pmatrix}
    \bfM_{a_{1, 1}} & \dots & \bfM_{a_{1, r}} \\
    \vdots & \ddots & \vdots \\
    \bfM_{a_{\ell, 1}} & \dots & \bfM_{a_{\ell, r}}
  \end{pmatrix}
\]
can be compactly represented by a collection of \(r\) samples of the
form \((\bfa, \langle \bfs, \bfa \rangle + \bfe)\), where \(\bfa\)
is a vector of \(\ell\) polynomials in \(\mathcal{R}\), and
\(\langle \bfs, \bfa \rangle = \sum_{i=0}^{\ell-1}s_{i}(X)\cdot a_{i}(X)\).
For cryptographic applications, it is common to consider the
case where \(\ell = 1\), resulting in samples of the form
\((a, a \cdot s + e)\), where \(a \in \mathcal{R}\).

Such quasi-cyclic codes are employed in the NIST submissions
HQC~\cite{HQCr4} and BIKE~\cite{BIKEr4}, both of which have advanced
to the fourth round of the post-quantum cryptography standardization
process.

\paragraph{Analysis of noise}  In the analysis of error vector distribution of
HQC \cite[par.~2.4]{HQCr4}, one requires an understanding of the product of
polynomials $t(X)$ and $R(X)$, where $t(X) \in \Ft[X] / (X^n - 1)$ is a fixed
polynomial and $R(X)$ is a polynomial whose coefficients are independently
Bernoulli distributed. In fact, they require the analysis of two independent
copies of such products $e(X) \eqdef t_1(X) R_1(X) + t_2(X) R_2(X)$. To make the analysis tracetable, the authors make the
simplifying assumption that the coefficients of this $e(X)$ are
independent.

In this work we reconsider this assumption. To set up our result, we
quickly introduce some notation. We write $X \leftarrow \Ber(\omega)$
to denote a $\F_2$-valued random variable such that $\Pr[X=b] =
\frac{1+(-1)^b2^{-\omega}}{2}$.\footnote{Below, we justify this
parametrization for the Bernoulli random variable.}

\begin{theorem}[Main Theorem (Informal); see {\Cref{thm:noiseStatDistLower}}]
  Let $t(X) \in \Ft[X]/(X^n-1)$ be a fixed polynomial with $\tau$
  nonzero coefficients, and let $R(X) = \sum_{i=0}^{n-1}R_iX^i$ be
  such that each $R_i\leftarrow \Ber(\omega)$ independently. Let
  $I(X) = \sum_{i=0}^{n-1}I_iX^i$ where each $I_i \leftarrow
  \Ber(\tau\omega)$ independently. Assuming $\omega \geq \Omega(\log
  n)$, the statistical distance between $t(X)R(X)$ and $I(X)$ is
  $\Omega(\sqrt{n}2^{-2\omega})$.
\end{theorem}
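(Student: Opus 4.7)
The plan is to exhibit a quadratic distinguisher $U$ on $\F_2^n$ whose mean under the law of $tR$ differs from its mean under the law of $I$ by an amount which is $\Omega(\sqrt{n})$ times the standard deviation, and then convert this mean gap into the claimed $\Omega(\sqrt{n}\cdot 2^{-2\omega})$ lower bound on the statistical distance.

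A short Fourier calculation sets the stage: writing $(tR)_k = \sum_{i \in \supp(t)} R_{k-i}$ and using the independence of the $R_j$'s (and of the $I_j$'s), one obtains for every $S \subseteq \ZZ/n\ZZ$
\[
  \E\bracks{(-1)^{\sum_{k \in S}(tR)_k}} = 2^{-\omega w_t(S)},
  \qquad
  \E\bracks{(-1)^{\sum_{k \in S} I_k}} = 2^{-\tau\omega\abs{S}},
\]
where $w_t(S)$ counts the indices $j$ appearing an odd number of times in the multiset $\bigcup_{k \in S}(k - \supp(t))$. These agree for $\abs{S}=1$ (the marginals of $tR$ and $I$ match by the piling-up lemma), but on pairs $S = \{k, k+d\}$ one finds $w_t(S) = 2\tau - 2N_d$, with $N_d \eqdef \abs{\set{(i_1, i_2) \in \supp(t)^2 : i_1 - i_2 = d}}$; as soon as $\tau \geq 2$ there is some $d \neq 0$ with $N_d \geq 1$, and there the pairwise biases already diverge.

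I would then fix such a $d$ (for instance $d = i_1 - i_2$ for two distinct elements of $\supp(t)$; the prototypical case is $\tau = 2$, $N_d = 1 = \tau - 1$) and study
\[
  U \eqdef \sum_{k=0}^{n-1}(-1)^{(tR)_k + (tR)_{k+d}} \in [-n, n].
\]
Writing $p$ and $q$ for the laws of $tR$ and $I$, the display above gives $\E_p[U] - \E_q[U] = n\cdot 2^{-2\tau\omega}(2^{2\omega N_d} - 1) = \Theta(n\cdot 2^{-2\omega(\tau - N_d)})$, equal to $\Theta(n\cdot 2^{-2\omega})$ in the prototypical case. A Fourier expansion of $\E[U^2]$ over subsets of size at most four then yields $\Var_p[U], \Var_q[U] = O(n)$: the diagonal $k = k'$ contributes exactly $n$, and under the hypothesis $\omega \geq \Omega(\log n)$ the off-diagonal contributions are only lower-order because the relevant Fourier exponents are $\Omega(\omega)$ outside an $O(n)$-sized ``collision'' set of pairs $(k, k')$.

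The remaining task is to turn the gap $\Delta \eqdef \E_p[U] - \E_q[U] = \Theta(n\cdot 2^{-2\omega})$ and the variance scale $\Var_{p,q}[U] = O(n)$ into the statistical-distance bound. In the easy regime $\Delta \gtrsim \sqrt{n}$, a Chebyshev threshold test on $U$ already yields $\statdist(p, q) = \Omega(1)$, dominating the target. In the hard regime $\Delta = o(\sqrt{n})$---exactly the range forced by $\omega \geq \Omega(\log n)$---I would invoke a quantitative CLT (Berry--Esseen for sums with local dependencies, applied to $(U - \E[U])/\sqrt{n}$) to show that under each of $p,q$ this normalised statistic is $o(1)$-close in Kolmogorov distance to a Gaussian of $\Theta(1)$ variance, and then exploit that two such Gaussians with mean-shift $\delta = \Theta(\sqrt{n}\cdot 2^{-2\omega})$ have TV-distance $\Theta(\delta)$---concretely by testing against a Lipschitz bump function $\phi \in [-1,1]$ of width comparable to the common standard deviation. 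The main obstacle is exactly this CLT step: Chebyshev alone collapses once $\omega$ grows, so one must control the mild long-range dependencies among the summands of $U$ inherited from shared $R_j$'s, and simultaneously make the variance estimate precise by ruling out too many ``collision'' quadruples $(k,k')$ where $w_t(\{k, k+d, k', k'+d\})$ drops far below $4\tau$.
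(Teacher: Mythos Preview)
Your approach is genuinely different from the paper's, and it has a real gap that prevents it from reaching the stated bound for general $t$.

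\medskip

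\textbf{What the paper does.} The paper never builds a distinguisher. Instead it computes the KL divergence exactly via a one-line entropy trick: when $t$ is invertible, multiplication by $t$ is a bijection on $\F_2[X]/(X^n-1)$, so $H(tR)=H(R)=n\,\tilde h(\omega)$. Combined with $H(I)=n\,\tilde h(\tau\omega)$ and \Cref{lem:divDependence}, this gives $D(tR\|I)=n(\tilde h(\tau\omega)-\tilde h(\omega))=n\cdot\Theta(2^{-2\omega})$ for \emph{every} invertible $t$ of weight $\tau\geq 2$, with no dependence on the support geometry of $t$. A reverse Pinsker inequality (valid here because $\Pr[tR=x]/\Pr[I=x]$ is uniformly bounded when $\omega\geq\log n$) then yields $\statdist(tR,I)\geq \tfrac13 D(tR\|I)=\Omega(n\cdot 2^{-2\omega})$, which is in fact a $\sqrt{n}$ factor \emph{stronger} than the informal statement you were aiming for.

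\medskip

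\textbf{Where your argument breaks.} Your quadratic statistic $U=\sum_k(-1)^{(tR)_k+(tR)_{k+d}}$ detects only the pairwise correlation at lag $d$, and that correlation has size $2^{-2\omega(\tau-N_d)}$. You correctly note that one can always find $d$ with $N_d\geq 1$, but for a generic $t$ that is essentially all you get: there is no reason for any nonzero shift of $\supp(t)$ to overlap $\supp(t)$ in more than $O(1)$ places. With $N_d=1$ your mean gap is $\Delta=\Theta(n\cdot 2^{-2\omega(\tau-1)})$, and even granting the variance bound and the CLT step, the resulting TV lower bound is $\Theta(\sqrt{n}\cdot 2^{-2\omega(\tau-1)})$, which misses the target $\sqrt{n}\cdot 2^{-2\omega}$ by a factor of $2^{-2\omega(\tau-2)}$. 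Your ``prototypical case'' $N_d=\tau-1$ is exactly the arithmetic-progression scenario the paper treats separately in the $s\geq 2$ setting; it is highly non-generic. The pairwise-distinguisher route is thus structurally too weak: the statistical distance between $tR$ and $I$ is driven by a global entropy deficit of order $n\cdot 2^{-2\omega}$, not by any single second-order correlation, and the paper's entropy argument captures this in one stroke.

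\medskip

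The CLT step you flag as the ``main obstacle'' is a secondary issue by comparison; even if it went through cleanly, the bound would still be off by an exponential-in-$\tau$ factor for generic $t$.
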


We note that, by an application of the Piling-up lemma
(\Cref{lem:pilUp}), it follows that $t(X)R(X)$ and $I(X)$ share the
same marginal distribution for each coordinate: \ie, for each $0 \leq
i \leq n-1$ and $b \in \F_2$ we have $\Pr[(tR)_i=b] =
\frac{1+(-1)^b2^{-\omega}}2 = \Pr[I_i=b]$. Hence, the ``source'' of
the statistical distance is the lack of independence between the
coordinates of $t(X)R(X)$. Furthermore, we remark that
\cite[par~2.4]{HQCr4} considers (in our parametrization) $\omega =
\Theta(\log n)$, i.e., the lower bound is indeed met.

Now, note that this does not directly invalidate the modelling of
HQC: there, they consider two \emph{independent} polynomials
$t_1(X)R_1(X)$ and $t_2(X)R_2(X)$, and then model their \emph{sum}.
While we cannot invalidate this modelling (and in fact, we suspect it
might in most cases be valid), we do point out some cases where the
modelling fails:
\begin{itemize}
  \item Suppose $\langle t_1, t_2 \rangle \subsetneq \Ft[X]/(X^n - 1)$, that
    is the ideal generated by the noise does not span the entire space. Then
    $t_1(X)R_1(X)+t_2(X)R_2(X)$ can never be statistically close to a Cartesian
    product of Bernoulli polynomials, as they do span the entire
    space. Under some
    reasonable assumptions polynomials of odd weight are invertible,
    so in practice this case is easy to avoid.
\item As an extreme case, if $t_1(X)=t_2(X)\eqqcolon t(X)$, then
    $t_1(X)R_1(X)+t_2(X)R_2(X) = t(X)(R_1(X)+R_2(X))$, and since
    $R_1+R_2$ can again be modelled by an independent Bernoulli
    polynomial, \Cref{thm:noiseStatDistLower} applies.
  \item Suppose now that the support sets of $t_1$ and $t_2$ (i.e., the
    indices of the nonzero coefficients) are both in \emph{arithmetic
    progressions} -- i.e.,  sets of the form $\{ax+b\pmod{n}:x \in
    \{0,1,\dots,\tau-1\}\}$ -- with the same common difference $a$. Then we can
    still show a nontrivial lower bound on the statistical distance.
\end{itemize}

For less ``structured'' cases of $t_1,t_2$ (or even, say, $t_1,\dots,t_s$ with
$s \geq 2$) we conjecture that such a gap in the statistical distance does not
persist.

\paragraph{Related independence heuristic in cryptography} The
question of the independence of marginals of the distribution of a
product of polynomials is not restricted to the analysis of code-based
cryptosystems. In particular, a similar assumption has also been made
in lattice-based cryptography to analyse the growth of the noise in
the context of~\emph{fully hommomorphic encryption}
(FHE)~\cite[Assumption 3.11]{JC:CGGI20}. Nevertheless, recent works
have began to suggest that this did not actually
hold~\cite{EPRINT:BMCM23,MP24}, which led to underestimating this
noise growth. Our results align with those observation.

\subsection{Overview of Techniques}

In order to lower bound the statistical distance, we in fact find it easier to
work with the \emph{Kullback-Leibler (KL) divergence} between the two
distributions. Pinsker's inequality shows that these two quantities are
intimately related; however, as we are seeking a \emph{lower bound} on the
statistical distance, this inequality is not directly applicable for our
purposes. Fortunately, under mild ``regularity'' conditions (namely, the ratio
  between the two considered probability distributions is never too
  large nor too
small), we can apply a ``reverse'' Pinsker's inequality~\cite{binette2019}. As
these regularity conditions hold for our distributions of interest (see
\Cref{subsec:stat-dist}), we can focus on the KL divergence.

We begin with a convenient lemma that may be folklore, but for lack of a
suitable reference (and because we believe it might be of independent interest)
we provide a proof (\Cref{lem:divDependence}): namely, that if $Q$ and $P$
are distributions over an $n$-fold Cartesian product with matching marginal
distributions and $Q$ follows a product distribution (\ie its coordinates are
independent), then the KL-divergence $D(P\|Q)= H(Q)-H(P)$, where $H(\cdot)$ is
the \emph{(Shannon) entropy} of the distributions. For our purposes, $P$ denotes
the distribution of $t(X)R(X)$ and $Q$ denotes the distribution of $I(X)$, where
we identify their support $\Ft[X]/(X^n-1)$ with $\Ft^n$ via the natural
isomorphism. Hence, as $H(Q)$ is easy to compute (being a Cartesian product of
Bernoulli distributions), we focus on upper bounding $H(P)$.

Here, we can consider two cases. Firstly, if $t(X)$ happens to not be
invertible, then already $t(X)R(X) \in \langle t(X) \rangle \subsetneqq
\Ft[X]/(X^n-1)$, where $\langle t(X) \rangle = \{t(X)a(X):a(X) \in
\Ft[X]/(X^n-1)\}$ is the ideal generated by $t(X)$. That is, $P$ is distributed
over a strict subset of $\Ft[X]/(X^n-1)$ size at most $2^{n-1}$; this already
guarantees a $H(P) \leq n-1 \ll H(Q)$ for our parameters of interest.

Otherwise, $t(X)$ is invertible. Then, it naturally follows that $H(P) =
H(R(X))$, i.e., just the distribution of $R(X)$. Indeed, multiplying by $t(X)$
is then a bijection from $\Ft[X]/(X^n-1)$ to itself, so it does not affect the
entropy. And we can again easily compute $H(R(X))$: it is again a Cartesian
product of Bernoulli distributions! That is, we can conclude
\[
  H(Q)-H(P) = n(\tilde h(\tau\omega)-\tilde h(\omega)) \ ,
\]
where $\tilde h(x)$ is the entropy of a $\Ber(x)$ random variable, and we recall
$\tau$ is the number of nonzero coefficients of $t(X)$. To conclude our desired
theorem, it suffices to lower bound $\tilde h(\tau\omega)-\tilde h(\omega)$,
which we do by expanding the Taylor series representation of $\tilde h$.

\medskip

Next, we consider cases where we can understand the entropy of $t_1R_1+\cdots+t_sR_s$, where $t_1,\dots,t_s$ are fixed polynomials and $R_1,\dots,R_s$ are independent Bernoulli polynomials (i.e., their coefficients are sampled independently). To make progress in this case, we write $t_1R_1+\cdots+t_sR_s = \sum_i C_i X^i$ and bound the entropy via a sum over roughly $n/2$ pairwise entropies $H(C_i,C_j)$. The formula for the joint distribution of $C_i$ and $C_j$ is not too difficult to obtain (and in fact has been obtained by prior work~\cite{pacher2016}), and one can observe that the joint entropies $H(C_i,C_j)$ are small if for many of the $t_\ell$'s, many of its nonzero coefficients overlap with many nonzero coefficients of $X^{j-i}t_\ell$. If $t_1,\dots,t_s$ are all of the form $\sum_{i=0}^{\tau-1}X^{a\cdot i + b}$ where $a \in \Z_n^*$ and $b \in \Z_n$ (i.e., the nonzero coefficients form an \emph{arithmetic progression}) we can show that the entropy bound will indeed be quite small.

\subsection{Future Directions}

We conclude the introduction with some directions that we leave open for future work.

\paragraph{Concentration of noise weight.} In this work, we provided an analysis
of $t(X)R(X)$ -- the product of a fixed polynomial and an i.i.d. Bernoulli
polynomial -- and showed that it is ``far'' from the distribution of an i.i.d.
Bernoulli polynomial. As discussed above, this has implications for
code-based cryptosystems such as HQC, where in order to allow for successful
decoding it is important that the weight of the noise $t_1(X)R_1(X) + t_2(X)R_2(X)$ be tightly
concentrated around its expected value. While such a concentration naturally
follows if the coordinates were indeed independent, but as we showed here in some cases that does not hold. However, this does not itself disprove the assertion that
the weight is concentrated, and indeed empirical evidence suggests that the
weight is sufficiently concentrated. Additionally, prior work~\cite{kawachi24}
already gave some concentration bounds (in this case, via Chebyshev's
inequality). We leave it as an open problem to provide further theoretical
evidence for the concentration of this weight.

\paragraph{Potential for worst-case to average-case reduction?} For
cryptographic purposes, it is of course vital that the hardness assumptions hold
for \emph{average-case problems}: namely, it is hard to solve some computational
problem (such as the decoding problem) when the instances are sampled randomly.
However, from a complexity-theoretic perspective we have a much firmer theory of
the hardness of \emph{worst-case problems}. That is, we have a more mature theory of
which problems are hard when the instances for a given algorithm are chosen
\emph{adversarially}.

In the case of LWE, Regev~\cite{regev2005,brakerski2013} famously showed that
the average-case LWE problem can be reduced to certain worst-case problems on
lattices. Inspired by this, Brakerski et al.~\cite{brakerski2019} recently
introduced a worst-case to average-case reduction for codes: namely, a reduction
from the classic worst-case decoding problem where $\A$, $\s$ and $\ee$ are
adversarially chosen, with the promise that $\ee$ has Hamming weight at most
$\tau$. One can imagine generalizing this to the quasi-cyclic case, as was
successfully done in the case of lattices (the analogous problem there is
typically termed Ring-LWE). Here, if one works with a rate $1/s$ quasi-cyclic
code the natural reduction strategy takes as input a noisy codeword
$(a_1(X)m(X)+t_1(X),\dots,a_s(X)m(X)+t_s(X))$ (with the sum of the weights of
the noise vectors $t_i$ being at most $\tau$), and then produces ``Ring-LPN''
like samples by sampling a ``smoothing'' vector $(R_1(X),\dots,R_s(X))$ and
considering
\[
  \left(\sum_{i=1}^s a_i(X)R_i(X), \left(\sum_{i=1}^s
  a_i(X)R_i(X)\right)m(X) + \sum_{i=1}^s t_i(X)R_i(X)\right) \ .
\] At the very least, this requires us to
analyze the distribution $\sum_{i=1}^s t_i(X)R_i(X)$, as we undertake in this
work.\footnote{More precisely, we make a step assuming one is choosing the
Bernoulli distribution to smooth. Other choices could be made, but we view this
as a natural first step.} Furthermore, for the \emph{standard} Ring-LPN
assumption, one must have $\sum_{i=1}^s t_i(X)R_i(X)$ close to an independent
Bernoulli polynomial. Our work points out that at least some structural
assumptions must be made on the vectors $t_1,\dots,t_s$: for example, if they
are all equal, then this reduction is doomed to fail as $\sum_{i=1}^s
t_i(X)R_i(X)$ is necessarily far from an independent Bernoulli polynomial.
Furthermore, if each vector $t_1, \ldots, t_s$ form an \emph{arithmetic
progression} with the same common difference, we can also show that
$\sum_{i=1}^s t_i(X)R_i(X)$ is far from an independent Bernoulli polynomial.

\paragraph{When is statistical distance small?} We suspect
that, given appropriate assumptions, $\sum_{i=1}^s t_i(X)R_i(X)$ is
statistically close to an independent Bernoulli polynomial. Recall
these assumptions are equivalent to the conditions such that the
entropy \[ H\parens{\sum_{i=1}^s t_i(X)R_i(X)}
\] is sufficiently high. While we have found conditions on the noise that ensure
that this entropy is low, we currently do not know any conditions that yield a
high entropy. In general, computing the entropy of the sum independent random
variables is hard, refer for example to \cite{tao2010, green2024}. To the best
of our knowledge, computing the entropy of this polynomial in general is an open
problem.

The conditions for which the entropy $H(\sum_{i=1}^s t_i(X)R_i(X))$ is known to be small
have an important caveat: the relevant worst case decoding problem \[
  (a_1(X) m(X) + t_1(X), \ldots, a_s(X) m(X) + t_s(X))
\]  is in fact \emph{easy}. If all noise vector vectors are identical,
then we can decode $(a_1(X) m(X) + t(X), a_2(X) m(X) + t(X))$. By computing the
difference $((a_1(X) + a_2(X)) m(X))$ it is easy to decode to $m(X)$, if $a_1(X)
+ a_2(X)$ is invertible. Decoding is also easy when the noise is guaranteed to
be an arithmetic progression. As the number of arithmetic progressing noise
vectors is polynomial (when $s$ is fixed), it is easy to brute force all
possible noise vectors. We remain especially interested in finding the entropy
$H(\sum_{i=1}^s t_i(X)R_i(X))$ for cases where the worst-case problem is assumed
to be hard.

 \section{Preliminaries}
\bmhead{General notation} For positive integer $n$, we write
$[n] = \{1,2,\dots,n\}$.

We choose a (somewhat) nonstandard definition for the Bernoulli random
variable: for $\omega \in [0,+\infty]$, we say $x \leftarrow \Ber(\omega)$
if \(x\in \Ft\) and
\[
  \Pr[x=b] =
  \begin{cases}
    \frac{1-2^{-\omega}}{2} & b=1\\
    \frac{1+2^{-\omega}}{2} & b=0
  \end{cases}
\]
In other words, $\omega$ is the log of the bias of the Bernoulli. For
positive integer $n$ we let $\Ber(\omega)^{\otimes n}$ to denote a
vector $(x_1,\dots,x_n)$, where each $x_i \leftarrow \Ber(\omega)$
independently. We sometimes abuse notation and write
$R(X) \leftarrow \Ber(\omega)^{\otimes n}$ to mean that
$R(X) = \sum_{i=0}^{n-1} R_i X^i$ and each $R_i \leftarrow
\Ber(\omega)$. We choose this parametrization as the statement of the
\emph{piling-up lemma} -- which determines the distribution of the
sum modulo 2 of Bernoulli random variables -- becomes very simple.

\begin{lemma}[Piling-up lemma] \label{lem:pilUp} Let
  $X \leftarrow \Ber(\omega_x)$ and $Y \leftarrow \Ber(\omega_y)$ be independent
  random variables. Then $X + Y \leftarrow \Ber(\omega_x + \omega_y)$.
\end{lemma}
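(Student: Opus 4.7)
The plan is to exploit the chosen parametrization: the reason to write $\Pr[x=b] = \frac{1+(-1)^b 2^{-\omega}}{2}$ is precisely that the ``character'' $\mathbb{E}[(-1)^x] = \Pr[x=0]-\Pr[x=1] = 2^{-\omega}$ is multiplicative under independent sums, and this character fully determines the distribution of an $\mathbb{F}_2$-valued random variable. So the strategy is to pass through this one-dimensional Fourier coefficient rather than trying to manipulate the probabilities directly.

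Concretely, I would first observe that for any $\mathbb{F}_2$-valued random variable $Z$, the two probabilities $\Pr[Z=0]$ and $\Pr[Z=1]$ are completely pinned down by the single quantity $\mathbb{E}[(-1)^Z]$, since $\Pr[Z=0]+\Pr[Z=1]=1$ and $\Pr[Z=0]-\Pr[Z=1]=\mathbb{E}[(-1)^Z]$. Then, by the definition of $\Ber(\omega)$, I have $\mathbb{E}[(-1)^X] = 2^{-\omega_x}$ and $\mathbb{E}[(-1)^Y] = 2^{-\omega_y}$.

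Next, I would use independence: $\mathbb{E}[(-1)^{X+Y}] = \mathbb{E}[(-1)^X(-1)^Y] = \mathbb{E}[(-1)^X]\cdot \mathbb{E}[(-1)^Y] = 2^{-\omega_x}\cdot 2^{-\omega_y} = 2^{-(\omega_x+\omega_y)}$. Comparing with the definition of $\Ber(\omega_x+\omega_y)$, whose character is exactly $2^{-(\omega_x+\omega_y)}$, and invoking the uniqueness remark from the first step, I conclude $X+Y \leftarrow \Ber(\omega_x+\omega_y)$.

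I do not expect any real obstacle here; the lemma is essentially a two-line calculation once the right viewpoint (characters on $\mathbb{F}_2$) is adopted. The only thing to be slightly careful about is the boundary case $\omega = +\infty$, where $2^{-\omega} = 0$ and $\Ber(\infty)$ is the uniform distribution on $\mathbb{F}_2$; the identity $2^{-\omega_x}\cdot 2^{-\omega_y} = 2^{-(\omega_x+\omega_y)}$ still holds under the convention $\omega_x+\infty = \infty$, so the statement continues to hold. If one preferred to avoid the Fourier perspective, an entirely elementary alternative is to expand $\Pr[X+Y=0] = \Pr[X=0]\Pr[Y=0] + \Pr[X=1]\Pr[Y=1]$, substitute the closed forms, and watch the cross terms $\pm 2^{-\omega_x}, \pm 2^{-\omega_y}$ cancel, leaving $\tfrac{1}{2}(1+2^{-(\omega_x+\omega_y)})$; the same computation yields $\Pr[X+Y=1] = \tfrac{1}{2}(1-2^{-(\omega_x+\omega_y)})$, matching $\Ber(\omega_x+\omega_y)$.
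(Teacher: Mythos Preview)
Your argument is correct; the paper states this lemma as a standard preliminary without proof, so there is nothing to compare against. Your character/Fourier computation $\E[(-1)^{X+Y}] = \E[(-1)^X]\E[(-1)^Y] = 2^{-(\omega_x+\omega_y)}$ is exactly the reason the paper chose this parametrization of the Bernoulli distribution in the first place, and your handling of the $\omega=\infty$ boundary case and the alternative direct expansion are both fine.
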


For distributions $P$ and $Q$ over $\Omega$, we use the following
notation for information-theoretic quantities:
\begin{align*}
  H(P)      & \eqdef \sum_{x \in \Omega} P(x) \log(1/P(x)) &&
  \text{(entropy)} \\
  D(P \| Q) & \eqdef \sum_{x \in \omega}P(x) \log(P(x) / Q(x)) &&
  \text{(Kullback-Leibler divergence)} \\
  \statdist(P, Q) & \eqdef \frac{1}{2} \sum_{x \in \omega} \abs{P(x) - Q(x)}
  && \text{(statistical distance)}
\end{align*}
For convenience, we abuse notation by writing random variables instead of the
distribution of these random variables. For example, when $X
\leftarrow P$, we write
$H(X)$ instead of $H(P)$.

For the binary entropy we write $h(x) \eqdef -x \log(x) - (1 -
x)\log(1-x)$ for the
binary entropy. We additionally write $\tilde h(\omega)$ for the
entropy of a $\Ber(\omega)$ random variable, so
\[
  \tilde h(\omega) \eqdef h\parens{\frac{1 - 2^{-\omega}}{2}} \ .
\]

Furthermore, we write \[
  p(\omega) \eqdef \frac{1 - 2^{-\omega}}{2},
\]
for the probability of sampling $1$ in the distribution $\Ber(\omega)$.
 \section{Analysis}
The general goal of this article is to analyze the distribution of $t(X) R(X)$.
Here $t(X)$ is some fixed polynomial in $\Ft[X] / (X^n - 1)$, and $R(X)
\leftarrow \Ber(\omega)^{\tensor n}$. Specifically, we want to know when the
coefficients of this polynomial are close to independent. First, we give the
marginal distribution of these coefficients, which is certainly
folklore, but we state as a lemma for future convenience.

\begin{lemma} \label{lem:rtBerSum}
  Let $\Pn := \Ft[X] / (X^n - 1)$ be the polynomial quotient ring. Let $t
  \in \Pn$ be a fixed polynomial, let $R \leftarrow
  \Ber(\omega)^{\tensor n}$ be a
  random variable.

  Write $(t R)_i$ for the coefficient before $X^i$ so
  \[
    t R = \sum_{i=0}^{n - 1} (t R)_i X^i.
  \] Then for all $k$ in $\{0, \ldots, n - 1\}$  \[
    (t R)_k = \sum_{\substack{j \in \{0, \ldots, n - 1\} \\ t_{k - j}
    = 1}}^{n - 1} R_j,
  \] where for convenience, computations in the indices are modulo $n$.
\end{lemma}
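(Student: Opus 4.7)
The plan is to derive the stated formula directly from the convolution expression for polynomial multiplication in $\Pn = \Ft[X]/(X^n-1)$. Writing $t = \sum_{i=0}^{n-1} t_i X^i$ and $R = \sum_{j=0}^{n-1} R_j X^j$, I first expand the product $tR$ and collect terms by the exponent of $X$, using the ring relation $X^n = 1$ to reduce exponents modulo $n$. This yields
\[
  tR = \sum_{i=0}^{n-1}\sum_{j=0}^{n-1} t_i R_j \, X^{i+j \bmod n} = \sum_{k=0}^{n-1} \left(\sum_{\substack{i,j \in \{0,\dots,n-1\} \\ i+j \equiv k \pmod n}} t_i R_j\right) X^k,
\]
so that $(tR)_k = \sum_{i+j \equiv k} t_i R_j$ as elements of $\Ft$.

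Next, I reindex by fixing $j$ and setting $i = k - j \bmod n$, giving $(tR)_k = \sum_{j=0}^{n-1} t_{k-j} R_j$ with index arithmetic mod $n$. The final step uses that $t_{k-j} \in \Ft$: whenever $t_{k-j} = 0$ the term contributes nothing, and whenever $t_{k-j} = 1$ the contribution is simply $R_j$. Hence the sum collapses to
\[
  (tR)_k = \sum_{\substack{j \in \{0,\dots,n-1\} \\ t_{k-j} = 1}} R_j,
\]
which is exactly the claimed identity.

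There is no real obstacle here; the only subtlety is being careful that all index computations in the subscripts of $t$ are interpreted modulo $n$, a convention that is justified precisely because we are working in the quotient ring $\Ft[X]/(X^n-1)$ rather than in the full polynomial ring. I would state this index convention explicitly once at the start of the proof and then proceed with the two-line calculation above.
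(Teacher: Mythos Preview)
Your proposal is correct and follows essentially the same approach as the paper: expand the double sum for $tR$, substitute $k = i+j \bmod n$, and then drop the terms with $t_{k-j}=0$. The paper's proof is a terser version of exactly this computation.
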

\begin{proof}
  The lemma follows from a simple rewriting of the polynomial

  \begin{align*}
    t R
    & = \sum_{i=0}^{n - 1} \sum_{j=0}^{n - 1} t_i R_j X^{i + j} \\
    & = \sum_{k=0}^{n - 1} (\sum_{j = 0}^{n - 1} t_{k - j} R_j)X^{k}
    && \text{(let $k = i + j$)} \\
    & = \sum_{k=0}^{n - 1} \parens{
      \sum_{\substack{j \in \{0, \ldots, n - 1\} \\ t_{k - j} = 1}}^{n - 1} R_j
    } X^{k} .
  \end{align*}
\end{proof}

From this lemma we can easily compute the marginal distributions of each
coefficient.

\begin{lemma} \label{lem:rtMarDist}
  Let $R$ and $t$ as in \cref{lem:rtBerSum}. Then \[
    (t R)_{k} \leftarrow \Ber(\weight{t} \omega)
  \]
\end{lemma}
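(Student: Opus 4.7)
The plan is to directly combine the two results already available: the explicit coefficient formula from \Cref{lem:rtBerSum} and the Piling-up lemma (\Cref{lem:pilUp}). Everything reduces to recognizing $(tR)_k$ as a sum of independent Bernoullis with a known number of summands.

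First, I would invoke \Cref{lem:rtBerSum} to write
\[
  (tR)_k = \sum_{\substack{j \in \{0,\ldots,n-1\} \\ t_{k-j} = 1}} R_j \ .
\]
The index set $\{j : t_{k-j} = 1\}$ is, via the bijection $j \mapsto k-j \bmod n$, in one-to-one correspondence with $\{i : t_i = 1\} = \supp(t)$. In particular, this index set has cardinality exactly $\weight{t}$, independent of $k$. Since the $R_j$ are independent $\Ber(\omega)$ random variables, $(tR)_k$ is therefore the sum of exactly $\weight{t}$ independent $\Ber(\omega)$ random variables.

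Next, I would apply the Piling-up lemma inductively: starting from a single term of distribution $\Ber(\omega)$, each additional summand of distribution $\Ber(\omega)$ adds $\omega$ to the bias parameter. After $\weight{t}$ summands, the distribution is $\Ber(\weight{t}\omega)$, which gives the claim.

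There is no real obstacle here: the piling-up lemma does all the work, and the only substantive observation is the cardinality count $\abs{\{j : t_{k-j} = 1\}} = \weight{t}$, which is immediate from the change of variables $i = k-j \pmod n$. This is also precisely the reason the bias-log parametrization $\Ber(\omega)$ was chosen in the preliminaries: it turns the (otherwise slightly awkward) formula $p \mapsto \tfrac{1-(1-2p)^{\weight{t}}}{2}$ into clean additivity of the parameter $\omega$.
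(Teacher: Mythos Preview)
Your proposal is correct and follows essentially the same approach as the paper: invoke \Cref{lem:rtBerSum} to express $(tR)_k$ as a sum of $\weight{t}$ independent $\Ber(\omega)$ variables, then apply the Piling-up lemma. Your version is slightly more explicit about the cardinality count via the bijection $j \mapsto k-j \pmod n$ and the inductive use of \Cref{lem:pilUp}, but these are minor elaborations of the same argument.
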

\begin{proof}
  From \cref{lem:rtBerSum} it follows for all $k \in \{0, \ldots, n - 1\}$
  that \[ (t R)_k = \sum_{\substack{j \in \{0, \ldots, n - 1\} \\ t_{k
  - j} = 1}}^{n - 1} R_j. \] Therefore, $(t R)_k$ is the sum mod 2 of
  $\weight{t}$ independent Bernoulli variables. The piling-up lemma
  (\cref{lem:pilUp}) gives us that $(t R)_k \leftarrow \Ber(\weight{t}
  \omega)$.
\end{proof}

Now \cref{lem:rtMarDist} perfectly characterizes the marginal distribution
of the coefficients $R t$. However, this lemma does not imply that $R
t \leftarrow
\Ber(\weight{t} \omega)^{\otimes n}$. This statement would hold if all the
coefficients of $R t$ were independent. Unfortunately, the coefficients are
not independent. Different coefficients of $R t$ depend on the same
coefficients in $R$. Specifically the intersection between \[
  \set{j \in \{0, \ldots, n - 1\} \mid t_{k - j} = 1} \cap \set{j \in \{0,
  \ldots, n - 1\} \mid t_{k' - j} = 1}
\]
coefficients in $R$ may very well be non-empty creating a dependence between
$(R t)_k$ and $(R t)_{k'}$.

One property we can immediately infer from the marginal distribution is the
expectation.

\begin{lemma}[Expectation of $|t R|$] \label{lem:exptR}
  Let $t, R \in \Ftxn$ with $t$ fixed and $R
  \leftarrow~\Ber(\omega)^{\otimes n}$, then \[
    \E[|t R|] = n \cdot \frac{1 - 2^{-\weight{t} \omega}}{2}
  \]
\end{lemma}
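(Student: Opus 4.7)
The plan is to reduce the expectation of the Hamming weight to a sum of marginal probabilities via linearity of expectation, and then invoke the marginal distribution already established in \Cref{lem:rtMarDist}.

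First I would write the Hamming weight as a sum of indicators: since each coefficient $(tR)_k$ takes values in $\{0,1\} \subset \Ft$, we have
\[
  |tR| = \sum_{k=0}^{n-1} \mathbf{1}\bracks*{(tR)_k = 1}.
\]
By linearity of expectation, it follows that
\[
  \E[|tR|] = \sum_{k=0}^{n-1} \Pr\bracks*{(tR)_k = 1}.
\]

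Next, I would apply \Cref{lem:rtMarDist}, which tells us that for every index $k \in \{0,\dots,n-1\}$, the coefficient $(tR)_k$ is distributed as $\Ber(\weight{t}\omega)$. By the definition of the Bernoulli distribution used in this paper, this gives
\[
  \Pr\bracks*{(tR)_k = 1} = \frac{1 - 2^{-\weight{t}\omega}}{2}.
\]
Substituting this into the sum above yields the claimed expression $\E[|tR|] = n \cdot \frac{1-2^{-\weight{t}\omega}}{2}$.

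There is no real obstacle here: the result is an immediate consequence of linearity of expectation and the marginal distribution computed in \Cref{lem:rtMarDist}, which itself is a direct application of the piling-up lemma (\Cref{lem:pilUp}). The only subtlety worth noting is that linearity of expectation holds without requiring the coefficients $(tR)_k$ to be independent, which is fortunate since, as already pointed out in the excerpt, they generally are not.
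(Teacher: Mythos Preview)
Your proof is correct and follows essentially the same approach as the paper: write the weight as a sum over the coefficients, apply linearity of expectation, and use the marginal distribution from \Cref{lem:rtMarDist} to evaluate each term. The paper's version is nearly identical, differing only in that it sums the $(tR)_i$ directly (viewed as $\{0,1\}$-valued reals) rather than introducing explicit indicator variables.
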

\begin{proof}
  The weight $|t R|$ is the sum of the coefficients of $t R$. Then we can
  use the linearity of the expectation to compute the sum of the
  expectations in $\Z$. Note the addition here is defined over $\R$,
  different from the previous computations where addition was defined over
  $\F_2$. Because every $(t R)_i \leftarrow \Ber(\weight{t} \omega)$
  \begin{align*}
    \E[|t R|]
    & = \E\bracks{\sum_{i=0}^{n - 1} (t R)_i} \\
    & = \sum_{i=0}^{n - 1} \E\bracks{(t R)_i} \\
    & = n \cdot \frac{1 - 2^{-\weight{t}\omega}}{2}.
  \end{align*}
\end{proof}

Because the coefficients are dependent, many common methods to analyze the
probability that the weight $|t R|$ is close to the expectation do not apply.
In \cite{kawachi24} an analysis is provided showing that the $|tR|$
is indeed somewhat concentrated around its mean (essentially by
analyzing the variance of $|tR|$ and then applying Chebyshev's inequality).

However, this dependence between the coefficients $(t R)_i$ turns out to be
significant, at least in the sense that the statistical distance
between $tR$ and a Bernoulli distribution will be non-negligible.
First let us give some necessary conditions for $t R$ to look
like a Bernoulli distribution. A Bernoulli distribution will reach every
polynomial with non-zero probability. Specifically there is a non-zero
probability that $(t R)(X) = 1$, so $t$ must be invertible in $P_n$.

So it is necessary for $t(X)$ to be invertible, if $t(X) R(X)$ should look
like a Bernoulli distribution. If $R(X)$ is unbiased enough, then $t(X)
R(X)$ will look like a Bernoulli distribution. In the extreme case: if
$R(X)$ is uniform over $\Ftx / (X^n - 1)$, then $t(X) R(X)$ will also be
uniform. In \Cref{thm:deptR} below, we discuss how low the bias of $R(X)$ can
be for $t(X) R(X)$ to look like a Bernoulli distribution.

In order to analyze the statistical distance between $t(X)R(X)$ and a
independent Bernoulli polynomial, as mentioned in the introduction we
prefer to analyze the KL-divergence, which the following lemma states
has a relatively simple form. This result is quite likely folklore,
but for lack of a good reference, we provide a proof.

\begin{lemma} \label{lem:divDependence}
  Let $P: \X \to [0, 1]$ be a discrete distribution over $\X = \X_1 \times
  \ldots \times \X_n$, with $P_1, \ldots P_n$ the marginal distributions.
  Define \[
    Q \eqdef P_1 \otimes \ldots \otimes P_n
  \] the distribution, such that the
  marginal distributions of $Q$ and $P$ agree (so $Q_i = P_i$), and the
  marginal distributions of $Q$ are independent.
  Then \[ D(P \| Q) = H(Q) - H(P). \]
\end{lemma}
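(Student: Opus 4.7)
The plan is a direct manipulation of the definition of KL divergence, exploiting the product structure of $Q$. First I would expand
\[
  D(P \| Q) = \sum_{x \in \X} P(x) \log \frac{P(x)}{Q(x)} = -H(P) - \sum_{x \in \X} P(x) \log Q(x),
\]
so it remains to show that $-\sum_x P(x) \log Q(x) = H(Q)$.

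Next, I would use the assumption that $Q = P_1 \otimes \cdots \otimes P_n$, which gives $Q(x) = \prod_{i=1}^n P_i(x_i)$ and hence $\log Q(x) = \sum_{i=1}^n \log P_i(x_i)$. Substituting and swapping sums yields
\[
  -\sum_{x \in \X} P(x) \log Q(x) = -\sum_{i=1}^n \sum_{x \in \X} P(x) \log P_i(x_i).
\]
The inner sum collapses via marginalization: writing $x = (x_i, x_{-i})$ with $x_{-i}$ the remaining coordinates, $\sum_{x_{-i}} P(x_i, x_{-i}) = P_i(x_i)$, so
\[
  \sum_{x \in \X} P(x) \log P_i(x_i) = \sum_{x_i \in \X_i} P_i(x_i) \log P_i(x_i) = -H(P_i).
\]
Therefore $-\sum_x P(x)\log Q(x) = \sum_{i=1}^n H(P_i)$.

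Finally, since $Q$ is a product distribution with marginals $P_i$, its entropy factorizes as $H(Q) = \sum_{i=1}^n H(P_i)$. Combining the two displays gives $D(P\|Q) = H(Q) - H(P)$, as desired. There is no substantial obstacle here; the only point requiring a bit of care is the marginalization step, where one must correctly use that $\log P_i(x_i)$ depends only on the $i$-th coordinate of $x$ so that summing $P(x)$ over the other coordinates produces $P_i(x_i)$. Everything else is bookkeeping.
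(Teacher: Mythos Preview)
Your proposal is correct and follows essentially the same route as the paper: both split $D(P\|Q)$ into $-H(P)$ plus the cross-entropy $-\sum_x P(x)\log Q(x)$, then use the product structure of $Q$ to factor the logarithm, marginalize coordinate by coordinate, and invoke additivity of entropy for independent coordinates to identify the cross-entropy with $H(Q)$. The only cosmetic difference is that the paper carries out the marginalization via conditional probabilities and keeps the notation $Q_k$ until the step $P_k=Q_k$, whereas you substitute $P_i$ from the outset; neither choice affects the argument.
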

\begin{proof}
  It is a well-known fact that we can write the KL-divergence as
  \begin{align*}
    D(P\|Q)
    & =\sum_{x \in \X} P(x) \log\parens{P(x) / Q(x)}
    \\ & = \sum_{x \in \X} P(x) \log\parens{P(x)} - \sum_{x \in \X}
    P(x) \log(Q(x))
    \\ & = H(P:Q) - H(P).
  \end{align*}
  Where $H(P : Q)$ is defined to be the \newdef{cross-entropy} $-
  \sum_{x \in \X} P(x) \log(Q(x))$. We only need to show the equality
  $H(P:Q) = H(Q)$. To demonstrate this, it is convenient to introduce
  a random vector $X = (X_1,\dots,X_n)$ such that $\Pr[X=x] = P(x)$
  for all $x \in \X$. We also write $\X_{\neq k} := \X_1 \times
  \cdots \times \X_{k-1} \times \X_{k+1} \times \cdots \times \X_n$,
  i.e., it is the cartesian product of all the $\X_i$'s \emph{except}
  $\X_k$. We then define the notation
  \[
    P_{\neq k|k}(x_{\neq k}|x_k) := \Pr[X_{\neq k} = x_{\neq k}|X_k=x_k]
  \]
  where $X_{\neq k} = (X_1,\dots,X_{k-1},X_{k+1},\dots,X_n)$,
  $x_{\neq k} \in \X_{\neq k}$ and $x_k \in \X_k$.

  We can now derive
{\allowdisplaybreaks
    \begin{align*}
      & H(P : Q)
      \\ ={}& \sum_{x \in \X} P(x_1, \ldots, x_n) \log(1/Q(x_1, \ldots x_n))
      \\ ={}& \sum_{x \in \X} P(x_1, \ldots, x_n) \sum_{k=1}^n
      \log(1/Q_k(x_k)) && \text{(as $Q_1, \ldots Q_n$ independent)}
      \\ ={}& \sum_{k=1}^n \sum_{x_k \in \X_k} P_k(x_k)
      \log(1/Q_k(x_k)) \sum_{x_{\neq k} \in \X_{\neq k}} P_{\neq
      k|k}(x_{\neq k} \mid x_k) && \text{(def. conditional prob.)}
      \\ ={}& \sum_{k=1}^n \sum_{x_k \in \X} P_k(x_k)
      \log(1/Q_k(x_k)) && \text{(sum prob. is 1)}
      \\ ={}& \sum_{k=1}^n \sum_{x_k \in \X} Q_k(x_k)
      \log(1/Q_k(x_k)) && \text{(as $P_k = Q_k$)}
      \\ ={}& \sum_{k=1}^n H(Q_k) && \text{(def. entropy)}
      \\ ={}& H(Q) && \text{(as $Q_1, \ldots Q_n$ independent)}
    \end{align*}
  }
\end{proof}

The following theorem is now a simple consequence of the above lemma.

\begin{theorem}[Dependence of $t R$] \label{thm:deptR}
  Let $t(X), R(X) \in \Ftx / (X^n - 1)$ with an invertible $t$ fixed
  and $R \leftarrow
  \Ber(\omega)^{\otimes n}$. Define $I \in \Fqx / (X^n - 1)$ with $I \leftarrow
  \Ber(\weight{t} \omega)$. In other words the coefficients of $I$
  are independent,
  and have the same marginal distribution as the coefficients of~$(t
  R)(X)$. Then \[
    D(t R \| I) = n \parens{
      \tilde h(\weight{t} \omega) - \tilde h(\omega)
    }.
  \]
\end{theorem}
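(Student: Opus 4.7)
The plan is to combine \Cref{lem:divDependence} with the observation that multiplication by an invertible polynomial is a bijection on $\Pn$, which preserves entropy. Let $P$ denote the distribution of $tR$ and $Q$ the distribution of $I$. By \Cref{lem:rtMarDist}, each coordinate $(tR)_k$ is distributed as $\Ber(\weight{t}\omega)$, which exactly matches the marginal of each coordinate of $I$. Hence $Q$ is precisely the product-of-marginals distribution associated to $P$, and \Cref{lem:divDependence} applies to give
\[
  D(tR \| I) = H(I) - H(tR).
\]

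The first term is immediate: since $I$ is by construction a Cartesian product of $n$ independent $\Ber(\weight{t}\omega)$ random variables, $H(I) = n\,\tilde h(\weight{t}\omega)$.

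For the second term I would invoke the invertibility of $t$. The map $\varphi_t \colon \Pn \to \Pn$ defined by $\varphi_t(P) = t\cdot P$ is an $\Ft$-linear bijection on $\Pn$ because $t$ is a unit in $\Pn$. Since entropy is invariant under deterministic bijections of the support, $H(tR) = H(\varphi_t(R)) = H(R)$. But $R \leftarrow \Ber(\omega)^{\tensor n}$ is itself a product of $n$ independent $\Ber(\omega)$ variables, so $H(R) = n\,\tilde h(\omega)$.

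Substituting these two identities yields
\[
  D(tR \| I) = n\,\tilde h(\weight{t}\omega) - n\,\tilde h(\omega) = n\parens[\big]{\tilde h(\weight{t}\omega) - \tilde h(\omega)},
\]
which is the claimed equality. There is no real obstacle here: once one notices that invertibility lets us transfer the entropy of $tR$ to the trivially computed entropy of $R$, the result reduces to bookkeeping. The only point requiring care is ensuring the marginals of $tR$ and $I$ genuinely coincide so that \Cref{lem:divDependence} applies directly, which is handled by \Cref{lem:rtMarDist}.
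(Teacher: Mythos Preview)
Your proof is correct and follows essentially the same approach as the paper: apply \Cref{lem:divDependence} to reduce to $H(I)-H(tR)$, compute $H(I)=n\tilde h(\weight{t}\omega)$ directly, and use invertibility of $t$ to get $H(tR)=H(R)=n\tilde h(\omega)$. If anything, your version is slightly more careful in explicitly invoking \Cref{lem:rtMarDist} to justify that the marginals match before applying \Cref{lem:divDependence}.
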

\begin{proof}
  This statement is a simple consequence of \cref{lem:divDependence}. The
  entropy $H(\Ber(\omega)) = \tilde h(\omega)$. So, the entropy of
  copies is $H(I) = n \tilde h(\weight{t} \omega)$. As $t$ is invertible
  we now that $H(t R) = H(R) = n \tilde h(\omega)$. In total
  \[
    D(t R \| I) = H(I) - H(t R).
  \]
\end{proof}

This theorem gives an exact analysis of the KL-divergence between $t R$ and
$n$ independent Bernoulli variables. Recall that for the purpose of our
reduction $t R$ needs to look like a Bernoulli distribution for some
parameters. So the main question is: what parameters can we pick such that
this divergence $D(t R \| I)$ is negligible?

As a sanity-check, we discuss the application of this theorem to some simple
cases. Suppose $\omega \to \infty$, then the distribution $R$ will converge
to the uniform distribution. Furthermore, $t R$ will also be uniform, so all
the coefficients will be independent. In the theorem we will have $\tilde
h(\omega) \approx \tilde h(\weight{t} \omega) \approx 1$. So \[
  D(t R \| I) = n(\tilde h(\weight{t} \omega) - \tilde h(\omega)) \approx 0.
\]

Another extreme case is $\weight{t} = 1$, so $t = X^k$ for some $k
\in \{0, \ldots,
n - 1\}$. Then multiplying by $t$ would be equivalent to shifting the
coefficients. The shift of independent Bernoulli variables still results in
independent Bernoulli variables. As $R \leftarrow \Ber(\omega)$ then $t R
\leftarrow(\Ber(\omega))$ with all the $(t R)_i$ coefficients completely
independent. In the theorem $D(t R \| I) = n (\tilde h(1 \omega) - \tilde
h(\omega)) = 0$.

When $\weight{t}$ is small or $\omega$ is big, the approximation $R t
\leftarrow \Ber(\weight{t}
\omega)$ is reasonable. On the other hand when $\omega$ is small but
$\weight{t}$
is quite big then $\tilde h(\weight{t} \omega) - \tilde h(\omega)$
becomes big. This
case is exactly when there is not enough entropy in $R$ for all the $(t
R)_i$ coefficients to be independent.

In practice, the noise is the sum of $s$ copies of $t R$, so we need to
analyze $s$ copies of this product \[
  t_1 R_1 + \ldots + t_s R_s,
\] with $R_1, \ldots, R_s \leftarrow \Ber(\omega)^{n}$ independently. Here the
total error weight relevant for the decoding problem is \[
  \tau = \weight{t_1} + \ldots + \weight{t_s}.
\] For simplicity, we will first focus on the analysis of one copy, $D(t R
\| I)$.

\subsection{Approximation for divergence of one product}
To give a stricter bound on when the divergence $D(t R \| I)$ is
small we need a lemma to approximate $\tilde h(\omega)$. This approximation
follows from a standard use of Taylor's theorem and may be folklore, but for
lack of a suitable citation we provide a proof.

\begin{lemma}[Approximation of $\tilde h(\omega)$] \label{lem:approxh}
  Let $\omega > 0$ we have that \[
    \tilde h(\omega) = 1 - \frac{2^{-2 \omega}}{\ln(2)} + \O(2^{-4\omega}).
  \]
\end{lemma}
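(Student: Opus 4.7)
The plan is a direct Taylor expansion of the binary entropy function $h$ about its maximum at $x = 1/2$, evaluated at $x = p(\omega) = 1/2 - 2^{-\omega}/2$. The structural fact that drives the result is the symmetry $h(x) = h(1-x)$: this forces the Taylor series of $h$ at $1/2$ to contain only even powers of the deviation from $1/2$, which is precisely why the leading correction to $h(1/2) = 1$ is of order $2^{-2\omega}$ and why the next correction skips to order $2^{-4\omega}$ rather than $2^{-3\omega}$.

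Concretely, I would first compute the low-order derivatives of $h$. We have $h(1/2) = 1$, and $h'(x) = \log_2\frac{1-x}{x}$, so $h'(1/2) = 0$. Differentiating again yields $h''(x) = -\tfrac{1}{(\ln 2)\,x(1-x)}$, which at $x=1/2$ is a finite negative constant proportional to $1/\ln 2$. By the $x \leftrightarrow 1-x$ symmetry the third derivative vanishes at $1/2$, so Taylor's theorem with Lagrange remainder at fourth order gives
\[
  h(1/2 + u) \;=\; 1 \;+\; \tfrac{1}{2}h''(1/2)\,u^{2} \;+\; \tfrac{1}{24}\,h^{(4)}(\xi)\,u^{4}
\]
for some $\xi$ between $1/2$ and $1/2+u$. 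Substituting $u = -2^{-\omega}/2$ turns the quadratic term into the claimed multiple of $2^{-2\omega}/\ln 2$, and the quartic term is absorbed into the $O(2^{-4\omega})$ error.

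The only mildly subtle step is making the implied constant in $O(2^{-4\omega})$ uniform in $\omega$. Differentiating $h''(x) = -\tfrac{1}{(\ln 2)\,x(1-x)}$ twice expresses $h^{(4)}$ as an explicit rational function of $x$ and $1-x$, which is clearly bounded on every compact subinterval of $(0,1)$. Since $p(\omega) = 1/2 - 2^{-\omega}/2$ lies in $[1/4, 1/2]$ whenever $\omega \geq 1$, it suffices to take the implied constant to be a multiple of $\sup_{x \in [1/4, 1/2]} |h^{(4)}(x)|$, which is finite; the finitely many values $\omega \in (0,1)$ can be absorbed into the constant separately. Beyond this observation, the main work is just careful bookkeeping of the Taylor expansion, which I do not expect to present any genuine obstacle.
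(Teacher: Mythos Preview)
Your approach is correct and essentially the same as the paper's: both arguments are Taylor expansions, with the paper expanding $(1\pm x)\log(1\pm x)$ around $x=0$ (with $x=2^{-\omega}$) and observing the odd-order terms cancel, while you expand $h$ around $1/2$ and invoke the symmetry $h(x)=h(1-x)$ to see the same cancellation. Two small remarks: first, your phrase ``the finitely many values $\omega\in(0,1)$'' is a slip, since this is a continuum; what you want is that on any compact subinterval of $(0,\infty)$ all quantities are bounded so the $O(2^{-4\omega})$ bound holds trivially there. Second, if you carry out the quadratic term explicitly you will find $\tfrac12 h''(1/2)u^2 = -\tfrac{2^{-2\omega}}{2\ln 2}$, which matches the paper's own derivation but differs by a factor of $2$ from the constant printed in the lemma statement; this is a typo in the statement, not an error in your reasoning.
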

\begin{proof}
  The proof follows from expanding the definition of $\tilde h$, and
  performing a Taylor expansion on the logarithm.
  \begin{align*}
    \tilde h(\omega)
    & = h\parens{\frac{1 - 2^{-\omega}}{2}} \\
    & = -\frac{1 - 2^{-\omega}}{2} \log\parens{\frac{1 -
    2^{-\omega}}{2}} - \frac{1 + 2^{-\omega}}{2} \log\parens{\frac{1
    + 2^{-\omega}}{2}} \\
    & = -\frac{1 - 2^{-\omega}}{2} (\log(1 - 2^{-\omega}) - 1) -
    \frac{1 + 2^{-\omega}}{2} (\log(1 + 2^{-\omega}) - 1) \\
    & = 1 - \frac{1}{2} \bigg((1 - 2^{-\omega}) \log(1 - 2^{-\omega})
    - (1 + 2^{-\omega}) \log(1 + 2^{-\omega})\bigg).
  \end{align*}

  We compute the Taylor expansion of $(1 \pm x) \log(1 \pm x)$. Because
  of the convention of using $\log$ base 2, we get an additional factor of
  $1 / \ln(2)$ in front of the usual Taylor series of the natural
  logarithm.
  \begin{align*}
    (1 + x) \log(1 + x) & = \frac{1}{\ln(2)} \parens{+x +
    \frac{x^2}{2} - \frac{x^3}{6}} + \O(x^4) \\
    (1 - x) \log(1 - x) & = \frac{1}{\ln(2)} \parens{-x +
    \frac{x^2}{2} + \frac{x^3}{6}} + \O(x^4)
  \end{align*}

  Rather than just the asymptotic behavior, we would also like to get
  an explicit lower bound on $\tilde h(\omega)$. Using Taylor's
  theorem we can compute an explicit
  formula for approximation error.

  Note that the fourth derivatives of $(1 + x)\log(1 + x)$ and $(1 - x)
  \log(1 - x)$ are
  \begin{align*}
    \frac{d^4}{d x^4} (1 + x) \log(1 + x) & = \frac{2}{\ln(2)(x + 1)^3}\\
    \frac{d^4}{d x^4} (1 - x) \log(1 - x) & = -\frac{2}{\ln(2)(x - 1)^3}
  \end{align*}
  Filling in Taylor's theorem gives that for some $\xi_{+}, \xi_{-} \in
  [0, 2^{-\omega}]$, the errors are of form
  \begin{align*}
    \epsilon_{+}
    & := (1 + x) \log(1 + x) - \frac{1}{\ln(2)} \parens{+x +
    \frac{x^2}{2} - \frac{x^3}{6}} \\
    & = \frac{\xi_{+}^4}{12 \ln(2) (1 + \xi_{+})^3} = \O(2^{-4 \omega}) \\
    \epsilon_{-}
    & := (1 - x) \log(1 - x) - \frac{1}{\ln(2)} \parens{-x +
    \frac{x^2}{2} + \frac{x^3}{6}} \\
    & = -\frac{\xi_{-}^4}{12 \ln(2) (1 - \xi_{-})^3} = \O(2^{-4 \omega}).
  \end{align*}
  Filling in the Taylor expansion yields
  \begin{align*}
    \tilde h(\omega)
    ={} & 1 - \frac{1}{2 \ln(2)}
    \begin{aligned}[t]
      \bigg( &
        -2^{-\omega}
        + \frac{2^{-2\omega}}{2}
        + \frac{2^{-3\omega}}{6}
        + \ln(2) \epsilon_{-}
        \\ &
        + 2^{-\omega}
        + \frac{2^{-2\omega}}{2}
        - \frac{2^{-3\omega}}{6}
        + \ln(2) \epsilon_{+}
      \bigg)
    \end{aligned} \\
    ={} & 1 - \frac{2^{-2\omega}}{2 \ln{2}} + \epsilon_{+} / 2 +
    \epsilon_{-} / 2.
  \end{align*}
  Using $\epsilon_{+}, \epsilon_{-} \in \O(2^{-4 \omega})$, we can
  immediately conclude \[
    \tilde h(\omega) = 1 - \frac{2^{-2 \omega}}{2 \ln(2)} + \O(2^{-4 \omega})
  \] as required.
\end{proof}

Filling in the approximation tells us when $D(t R \| I)$ is negligible.

\begin{corollary} \label{cor:tRNegl}
  Let $t, R, I \in \Ftxn$, with $t$ fixed and invertible and $R \leftarrow
  \Ber(\omega)^{\otimes n}$, and
  $I~\leftarrow~\Ber(\weight{t} \omega)^{\otimes n}$. Then \[
    D(t R \| I)
    = n \Theta(2^{-2 \omega}).
  \]
\end{corollary}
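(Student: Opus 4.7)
The plan is to deduce the corollary in two short steps by combining \Cref{thm:deptR} with the Taylor expansion from \Cref{lem:approxh}. Since $t$ is assumed invertible, \Cref{thm:deptR} directly gives
\[
  D(t R \| I) = n \parens{\tilde h(\weight{t}\omega) - \tilde h(\omega)},
\]
so it suffices to show that $\tilde h(\weight{t}\omega) - \tilde h(\omega) = \Theta(2^{-2\omega})$.

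Next I would apply \Cref{lem:approxh} separately to $\tilde h(\omega)$ and to $\tilde h(\weight{t}\omega)$ and subtract. The constant $1$'s cancel, and the quadratic pieces combine to give
\[
  \tilde h(\weight{t}\omega) - \tilde h(\omega) = \frac{2^{-2\omega} - 2^{-2\weight{t}\omega}}{2\ln(2)} + O(2^{-4\omega}).
\]
Under the standing assumption $\weight{t} \geq 2$ (the case $\weight{t} = 1$ yields zero divergence, as already observed in the sanity check following \Cref{thm:deptR}), the numerator equals $2^{-2\omega}\parens{1 - 2^{-2(\weight{t}-1)\omega}}$, which for $\omega$ sufficiently large lies between a fixed positive constant and $1$ times $2^{-2\omega}$. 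The $O(2^{-4\omega})$ remainder is of strictly lower order, so the whole bracketed expression is $\Theta(2^{-2\omega})$, and multiplying by $n$ delivers the claim.

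The proof is essentially a mechanical computation once \Cref{thm:deptR} and \Cref{lem:approxh} are in hand. The only point requiring a modicum of care is ensuring that the subtraction of two approximations, each carrying an $O(2^{-4\omega})$ tail, does not swamp the $\Theta(2^{-2\omega})$ leading behavior. This is precisely where the hypothesis $\weight{t} \geq 2$ enters, guaranteeing that $2^{-2\weight{t}\omega}$ is asymptotically dominated by $2^{-2\omega}$ and therefore cannot cancel against it. No new ideas or nontrivial estimates beyond the two cited results are required.
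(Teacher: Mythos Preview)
Your proposal is correct and follows essentially the same approach as the paper's own proof: apply \Cref{thm:deptR} to reduce to $n(\tilde h(\weight{t}\omega)-\tilde h(\omega))$, then plug in \Cref{lem:approxh} and observe that the leading $2^{-2\omega}$ term dominates. You are in fact slightly more careful than the paper in flagging the need for $\weight{t}\geq 2$, without which the divergence is identically zero and the $\Theta(2^{-2\omega})$ conclusion fails.
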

\begin{proof}
  From \cref{thm:deptR} we get $D(t R \| I) = n (\tilde h(\weight{t} \omega)
  - \tilde h(\omega))$. Filling in \cref{lem:approxh} gives
  \begin{align*}
    D(t R \| I)
    & = n\, (\tilde h(\weight{t} \omega) - \tilde h(\omega)) \\
    & = n\, \parens{1 - \frac{2^{-2\weight{t}\omega}}{2 \ln(2)} +
      \O(2^{-4\omega}) - (1 - \frac{2^{-2 \omega}}{2 \ln(2)} +
    \O(2^{-4\omega}))}\\
    & = n\cdot\Theta(2^{-2 \omega})
  \end{align*}
  as required.
\end{proof}

When $\omega$ is small \cref{cor:tRNegl} does not tell us much about the
quantity. However, even assuming that $\tilde h(\norm{t} \omega) - \tilde
h(\omega)$ remains constant, $D(t R \| I) \to \infty$ when $n \to \infty$.
Notably, if we want $D(t R \| I) \to 0$, then we need $\omega \to \infty$.

This theorem is also relevant for the post-quantum scheme HQC.
\cite{HQCr4} In Proposition 2.4.2, the noise vector $e'$ has a very
similar structure as we analyzed here, of a product of a polynomial with a
Bernoulli distribution with some other polynomial. In the paper $e'$ is
analyzed as a vector of independent entries. \Cref{thm:deptR,cor:tRNegl}
seem to suggest that making this independence assumption is too optimistic.
On the other hand the required properties which should have followed from
independence, can still hold. For example, for partial analysis on the
Hamming weight of $e'$, refer to \cite{kawachi24}.

\subsection{Statistical distance} \label{subsec:stat-dist}

\Cref{thm:deptR,cor:tRNegl} tell us when the Kullback-Leibler
divergence is small. Still, we would also like to give a bound on the
statistical distance. To get an upper bound on the statistical
distance, we can use the well-known Pinsker's inequality~\cite[Sec.
6]{kemperman1969}.

\begin{theorem}[Pinsker's inequality]
  For $P$ and $Q$ distributions \[
    \statdist(P, Q) \leq \sqrt{\frac{1}{2} D(P \| Q)}.
  \]
\end{theorem}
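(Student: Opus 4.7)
The plan is to follow the classical two-step reduction: first reduce to the binary case via a data-processing argument, and then establish the inequality for two-point distributions by a short calculus computation.

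\textbf{Step 1 (reduction to binary).} I would set $A = \{x \in \Omega : P(x) \geq Q(x)\}$, so that by the standard identity $\statdist(P,Q) = P(A) - Q(A)$. Writing $p := P(A)$ and $q := Q(A)$, the claim is that
\[
D(P \| Q) \;\geq\; p \log(p/q) + (1-p) \log((1-p)/(1-q)).
\]
This is the data-processing inequality for KL divergence under the coarse-graining $x \mapsto \mathbf{1}[x \in A]$, and follows from applying the log-sum inequality once to the masses inside $A$ and once to the masses inside $A^c$. With this reduction, $\statdist(P,Q) = |p - q|$ and it suffices to show the inequality for two-point distributions.

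\textbf{Step 2 (binary case).} It remains to show that for all $p, q \in [0,1]$,
\[
2 (p-q)^2 \;\leq\; p \log(p/q) + (1-p)\log((1-p)/(1-q)).
\]
I would define $f(q) := p \log(p/q) + (1-p)\log((1-p)/(1-q)) - 2(p-q)^2$ and observe that $f(p)=0$. Differentiating gives
\[
f'(q) = \frac{q-p}{q(1-q)} + 4(p-q) = (p-q)\left(4 - \frac{1}{q(1-q)}\right).
\]
Since $q(1-q) \leq 1/4$ on $[0,1]$, the parenthesized factor is nonpositive, so $f'(q)$ has the same sign as $q-p$. Hence $f$ is decreasing on $[0,p]$ and increasing on $[p,1]$, so $f \geq f(p) = 0$ everywhere, which yields the binary Pinsker inequality and hence the theorem.

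\textbf{Main obstacle.} The calculus in Step 2 is essentially routine; the delicate step is the reduction in Step 1. In particular, one must handle the cases $q \in \{0,1\}$ carefully, since then $p$ is forced to equal $q$ (else $D(P\|Q) = +\infty$ and the bound is vacuous), and one must verify that the log-sum inequality can be applied when some $Q(x)$ vanish using the convention $0 \log 0 = 0$. A minor bookkeeping point is that the stated constant $1/2$ is tied to the natural logarithm; if one reads $\log$ as base $2$ (as elsewhere in the paper), the constant should be adjusted by a factor of $\ln 2$, but the proof strategy is unchanged.
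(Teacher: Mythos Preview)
Your proof is correct and follows the standard two-step argument (data-processing reduction to the binary case, then a convexity/calculus computation). However, the paper does not actually prove this theorem: it is stated as the well-known Pinsker inequality with a citation to~\cite[Sec.~6]{kemperman1969} and used as a black box, so there is no ``paper's proof'' to compare against. Your observation about the logarithm base is well taken; since the paper uses $\log$ base $2$ throughout, the sharp constant would be $\ln 2/2$ rather than $1/2$, but as $\ln 2 < 1$ the inequality as stated in the paper remains valid (just not tight), and this does not affect any of the downstream arguments.
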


Pinsker's inequality immediately tells us that \[ \statdist(t R, I) \leq
  \sqrt{\frac{1}{2}\,D(t R \| I)} = \sqrt{\frac{n}{2}\cdot\parens{\tilde
h(\weight{t} \omega) - \tilde h(\omega)}}, \] and alternatively \[
\statdist(t R, I) \leq \sqrt{n} \cdot \O(2^{-\omega}). \]

Still we would also like to get a lower bound on the statistical distance.
This other direction is harder, because given a certain statistical distance
$\delta = \statdist(P, Q)$ the Kullback-Leibler divergence $D(P \| Q)$ can
be infinitely large, specifically when there is an $x$ such that $P(x) > 0$,
and $Q(x) = 0$. As $P(x)$ can be arbitrarily small, a lower bound on the
statistical distance based on the Kullback-Leibler divergence is not
possible in general.

In our case, we are working with Bernoulli variables that span the entire
outcome space $\Ftx / (X^n - 1)$. Therefore, we know that this extreme case
cannot occur. Therefore, we can use the results
in \cite{binette2019} to get a range on the statistical distance. The paper
was about arbitrary random variables. Because we just deal with
discrete random variables, we produce their results for the discrete case.

\begin{theorem}[Reverse Pinsker Inequalities \cite{binette2019}]
  \label{thm:RevPinsker} Let $P, Q: \mathcal X \to [0, 1]$ be distinct
  discrete probability distributions over $\mathcal X$. Let \[
    m = \min_{x \in X} \frac{P(x)}{Q(x)}, \,
    M = \max_{x \in X} \frac{P(x)}{Q(x)}.
  \]
  Then \[
    D(P \| Q) \leq \statdist(P, Q) \, \frac{m \log(m)}{1 - m} +
    \frac{M \log(M)}{M - 1}.
  \]
\end{theorem}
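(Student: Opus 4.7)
The plan is to reduce the bound to a scalar convexity argument for the function $\phi(t) := t \log t$. Setting $r(x) := P(x)/Q(x)$, one can rewrite
\[
D(P \| Q) \;=\; \sum_{x \in \mathcal X} Q(x)\, \phi(r(x)).
\]
By hypothesis $r(x) \in [m, M]$ for every $x$, and since $P, Q$ are distinct probability distributions summing to $1$, one must have $m \le 1 \le M$. The function $\phi$ is convex on $(0, \infty)$ and vanishes at $t = 1$, which strongly suggests splitting the sum according to whether $r(x) \ge 1$ or $r(x) < 1$ and exploiting convexity separately on each piece.

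Concretely, I would partition $\mathcal X$ as $\mathcal X_+ \cup \mathcal X_-$ with $\mathcal X_+ := \{x : r(x) \ge 1\}$ and $\mathcal X_- := \{x : r(x) < 1\}$. On $\mathcal X_+$, convexity of $\phi$ on $[1, M]$ places its graph below the secant joining $(1, 0)$ to $(M, M \log M)$, giving
\[
\phi(r) \;\le\; \frac{M \log M}{M - 1}\,(r - 1) \qquad \text{for } r \in [1, M].
\]
Symmetrically, on $\mathcal X_-$ the secant from $(m, m \log m)$ to $(1, 0)$ dominates $\phi$ on $[m, 1]$, yielding
\[
\phi(r) \;\le\; \frac{m \log m}{1 - m}\,(1 - r) \qquad \text{for } r \in [m, 1].
\]

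Substituting both chord bounds back into the expression for $D(P\|Q)$ and invoking the elementary identities
\[
\sum_{x \in \mathcal X_+} Q(x)\bigl(r(x) - 1\bigr) \;=\; \sum_{x : P(x) \ge Q(x)} \bigl(P(x) - Q(x)\bigr) \;=\; \statdist(P, Q)
\]
together with the symmetric identity $\sum_{x \in \mathcal X_-} Q(x)(1 - r(x)) = \statdist(P, Q)$, the two restricted sums each collapse to $\statdist(P, Q)$ multiplied by the corresponding chord slope, which produces the claim.

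The main obstacle here is purely bookkeeping: one must verify each chord inequality on its proper subinterval and keep the signs straight --- in particular $\frac{m \log m}{1 - m} \le 0$ since $m \le 1$, which is consistent with $\phi(r) \le 0$ on $\mathcal X_-$. No tool beyond the convexity of $t \log t$ and the definition of $\statdist$ is required, so once the right decomposition has been identified the argument compresses to only a few lines.
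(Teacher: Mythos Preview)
Your argument is correct and delivers exactly the inequality the paper actually uses, namely $D(P\|Q)\le \statdist(P,Q)\bigl(\tfrac{m\log m}{1-m}+\tfrac{M\log M}{M-1}\bigr)$; note that the theorem statement as printed drops a pair of parentheses, but the paper's own proof line and the later application in \Cref{thm:noiseStatDistLowerGeneric} make the intended grouping unambiguous, and that is what you prove.

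The paper does not give an independent argument: its proof is the single sentence ``fill in $f(x)=x\log x$ in \cite[Th.~1]{binette2019}.'' Your chord-over-convex-graph bounds on $[m,1]$ and $[1,M]$, combined with the identity $\sum_{x\in\X_\pm}\lvert P(x)-Q(x)\rvert=\statdist(P,Q)$, are precisely the mechanism underlying that cited result for general $f$-divergences, so what you have written is the self-contained specialization rather than a genuinely different route. The trade-off is transparency versus generality: your version needs nothing beyond convexity of $t\log t$ and the definition of total variation and fits in a few lines, while the paper's citation covers every $f$-divergence at once but outsources the work. One small point worth making explicit in your write-up is that distinctness of $P$ and $Q$ forces $m<1<M$ strictly (since both are probability measures), so the denominators $1-m$ and $M-1$ are nonzero; you assert $m\le 1\le M$ but use the strict inequalities implicitly.
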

\begin{proof}
Filling in $f(x) = x \log(x)$ in \cite[Th. 1]{binette2019} gives \[
    D(P \| Q) \leq \statdist(P, Q) \,
    \parens{
      \frac{m \log(m)}{1 - m} + \frac{M \log(M)}{M - 1}
    }.
  \]
\end{proof}
Now we can apply this bound to $D(t R \| I)$.

\begin{theorem} \label{thm:noiseStatDistLower}
  For $n \geq 3$ and $\omega \geq \log(n)$, the statistical distance has a
  lower bound \[
    \statdist(t R \| I) \geq
    1 / 3 \cdot D(t R \| I).
  \]
\end{theorem}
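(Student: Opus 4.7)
The plan is to derive the result from the Reverse Pinsker Inequality (\Cref{thm:RevPinsker}), which states
\[
D(tR \| I) \leq \statdist(tR, I) \cdot \left(\frac{m \log m}{1 - m} + \frac{M \log M}{M - 1}\right),
\]
with $m \eqdef \min_x P(x)/Q(x)$ and $M \eqdef \max_x P(x)/Q(x)$, where $P$ and $Q$ denote the distributions of $tR$ and $I$ respectively. Denote the parenthesized quantity by $C$. It suffices to show $C \leq 3$, since the theorem then follows by rearrangement.

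To bound $m$ and $M$ I would exploit the invertibility of $t$ to obtain closed-form densities: writing $p \eqdef 2^{-\omega}$ and $q \eqdef 2^{-\weight{t}\omega}$, one has $P(x) = 2^{-n}(1-p)^{w}(1+p)^{n-w}$ where $w = \weight{t^{-1}x}$, and analogously $Q(x) = 2^{-n}(1-q)^{w'}(1+q)^{n-w'}$ with $w' = \weight{x}$. Since $P(x) \leq 2^{-n}(1+p)^n$ and $Q(x) \geq 2^{-n}(1-q)^n$ uniformly in $x$, the uniform upper bound
\[
M \leq \frac{(1+p)^n}{(1-q)^n}
\]
follows. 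The subcase $\weight{t} = 1$ is trivial: here $t = X^k$ for some $k$, so $tR$ is distributed identically to $I$ and $D(tR \| I) = 0$. Assume therefore $\weight{t} \geq 2$. The hypothesis $\omega \geq \log n$ yields $p \leq 1/n$, whence $(1+p)^n \leq e^{np} \leq e$; likewise $q \leq p^2 \leq 1/n^2$ and Bernoulli's inequality gives $(1-q)^n \geq 1 - nq \geq 1 - 1/n \geq 2/3$ (using $n \geq 3$). Combining these produces $M \leq 3e/2$.

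To finish, introduce $f(x) \eqdef x \log x / (x - 1)$ (extended continuously at $x = 1$ by $1/\ln 2$). A sign analysis shows that $f > 0$ on $(0, \infty)$, since the numerator and denominator change sign simultaneously at $x = 1$; in particular $\frac{m \log m}{1 - m} = -f(m) \leq 0$, so $C \leq f(M)$. A short derivative computation shows $f$ is increasing on $(0, \infty)$, hence $f(M) \leq f(3e/2)$, and a direct numerical evaluation gives $f(3e/2) \approx 2.687 < 3$. Thus $C < 3$, and Reverse Pinsker yields $\statdist(tR \| I) \geq D(tR \| I) / C \geq D(tR \| I) / 3$. The main obstacle is securing the uniform bound on the likelihood ratio $M$: this is exactly where the joint hypothesis $\omega \geq \log n$ and $n \geq 3$ is essential, since it simultaneously keeps $(1+p)^n$ bounded above by $e$ and $(1-q)^n$ bounded away from zero by a universal constant.
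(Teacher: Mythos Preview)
Your argument is correct and follows the same overall strategy as the paper --- apply the Reverse Pinsker inequality (\Cref{thm:RevPinsker}) and bound the factor $C$ by $3$ --- but the way you control $C$ differs in two places worth noting.

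First, you discard the $m$--term entirely by observing that $m \leq 1$ forces $\frac{m\log m}{1-m}\leq 0$, so $C \leq f(M)$. The paper instead retains both terms and bounds the ratio symmetrically, proving $m \geq 1/8$ and $M \leq 8$ via the cruder estimate $(1-2^{-\weight{t}\omega})^n \geq (1-2^{-\omega})^n$ and the monotone decrease of $u_n = \bigl(\tfrac{1+1/n}{1-1/n}\bigr)^n$; with those endpoints the two terms sum to \emph{exactly} $3$. Your route is shorter for $s=1$ and gives a sharper $M$ (roughly $4.08$ against $8$), at the cost of the case split $\weight{t}=1$ versus $\weight{t}\geq 2$.

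Second, precisely because the paper keeps the symmetric $[1/8^s,8^s]$ bounds on the ratio, its argument immediately generalizes to $s$ summands $t_1R_1+\cdots+t_sR_s$ (\Cref{thm:noiseStatDistLowerGeneric}), yielding the constant $1/(3s)$; the present theorem is then just the specialization $s=1$. Your tighter bound on $M$ relies on $q=2^{-\weight{t}\omega}\leq 2^{-2\omega}$, which has no obvious analogue once several $t_i$'s are in play, so your version does not extend as directly.
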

\Cref{thm:noiseStatDistLower} is a special case of
\cref{thm:noiseStatDistLowerGeneric} where $s = 1$. We defer the proof to the
generic theorem.

The constant $1/3$ could have been bigger. First the approximation $\norm{t}
\omega \geq 2 \omega$ is very rough. Furthermore, picking a value $n > 3$
would have yielded a slightly bigger constant than $1/3$. However, we mostly
care about the linear relation between the statistical distance and the
divergence, so the current bound is good enough.

\begin{corollary}
  Assuming $\omega \geq \log(n)$ the statistical distance $\statdist(t R,
  I)$ can be bounded to the range
  \[
    \statdist(t R, I) \in \bracks{n \cdot \Theta(2^{-2 \omega}),
    \sqrt{n} \cdot \Theta(2^{-\omega})} \ .
  \]
\end{corollary}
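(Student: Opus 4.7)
The plan is to obtain the two endpoints of the interval by pairing the KL-divergence estimate from \Cref{cor:tRNegl} with the upper and lower Pinsker-type bounds already in hand. All the ingredients are stated above, so the proof amounts to a clean combination.

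For the upper bound, I would invoke Pinsker's inequality (\Cref{thm:RevPinsker}'s classical companion, already displayed in the subsection) to get $\statdist(tR, I) \le \sqrt{D(tR \| I)/2}$. Substituting the estimate $D(tR \| I) = n \cdot \Theta(2^{-2\omega})$ from \Cref{cor:tRNegl} and taking the square root yields $\statdist(tR, I) \le \sqrt{n} \cdot \Theta(2^{-\omega})$, which is the right endpoint of the claimed interval. Here the hypothesis $\omega \geq \log(n)$ is only needed to match the regime of the other bound; the Pinsker inequality itself is unconditional.

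For the lower bound, I would apply \Cref{thm:noiseStatDistLower}, which under the assumption $\omega \geq \log(n)$ (and $n \geq 3$) gives $\statdist(tR, I) \geq \tfrac{1}{3} D(tR \| I)$. Plugging in $D(tR \| I) = n \cdot \Theta(2^{-2\omega})$ from \Cref{cor:tRNegl} (and absorbing the factor of $1/3$ into the $\Theta$) produces $\statdist(tR, I) \geq n \cdot \Theta(2^{-2\omega})$, matching the left endpoint.

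There is essentially no obstacle: once \Cref{cor:tRNegl} and \Cref{thm:noiseStatDistLower} are available, this corollary is a one-line concatenation. The only minor subtlety worth flagging in the write-up is that the lower and upper bounds on $\statdist(tR,I)$ differ in asymptotic order ($n\cdot 2^{-2\omega}$ vs.\ $\sqrt{n}\cdot 2^{-\omega}$), so the interval is genuinely nontrivial and one should note that closing this gap (by e.g.\ sharpening the reverse Pinsker step) is left open. Beyond that remark, the proof is simply the two displayed substitutions.
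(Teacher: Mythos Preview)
Your proposal is correct and matches the paper's approach: the corollary is stated without an explicit proof precisely because it is the immediate concatenation of Pinsker's inequality (already displayed as yielding $\statdist(tR,I)\le\sqrt{n}\cdot\O(2^{-\omega})$) with \Cref{thm:noiseStatDistLower} and \Cref{cor:tRNegl}. Your write-up captures exactly this combination.
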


\subsection{Divergence for sum of products}
As discussed in practice we would like to analyze the independence of the
coefficients of $t_1 R_1 + \ldots + t_s R_s$ for $s \geq 2$. If we want to apply
\cref{lem:divDependence}, we need to compute the entropy $H(t_1 R_1 + \ldots +
t_s R_s)$. Unfortunately, the entropy of this sum is quite hard to
compute in general. For
some related work on the entropy of sums see \cite{green2024}.

We can easily reason about one special case though. Suppose $t \eqdef t_1 =
\ldots = t_s$. Then $t_1 R_1 + \ldots + t_s R_s = t(R_1 + \ldots + R_s)$, is
invertible. As $R_1 + \ldots + R_s ~ \Ber(s \omega)^{\otimes n}$, we return to
our analysis of $t R$. Now the coefficients of $t (R_1 + \ldots + R_s)$ are
marginally distributed according to $\Ber(\weight{t} s \omega)$. On the other
hand $H(t (R_1 + \ldots + R_s)) = n \tilde h(s \omega)$. Therefore, for $I_s
\leftarrow \Ber(\weight{t} s \omega)^{\otimes n}$, we have \[ D(t(R_1
    + \ldots + R_s)
\| I_s) = n (\tilde h(\weight{t} s \omega) - \tilde h(s \omega)). \]

For the general case, it seems much harder to compute this entropy. We can use
the bound \[
  H(t_1 R_1 + \ldots + t_s R_s) \leq H(R_1) + \ldots + H(R_s) = s n
  \tilde h(\omega).
\] Unfortunately, this bound is (for parameters of interest) weaker
than the trivial upper bound \[
  H(t_1 R_1 + \ldots + t_s R_s) \leq n,
\] as we have a distribution over $n$ bits.

Instead, we can analyze the entropy based on the distribution of the
coefficients. If
we write
\[
  C_0 + C_1 X + \ldots + C_{n - 1} X^{n - 1} \eqdef t_1 R_1 + \ldots + t_s R_s,
\] then $H(t_1 R_1 + \ldots + t_s R_s) = H(C_0, C_1, \ldots, C_{n - 1})$. Recall
that for $\tau \eqdef \weight{t_1} + \ldots + \weight{t_s}$ the marginal
distributions are $C_0, \ldots, C_{n - 1} \leftarrow \Ber(\tau
\omega)$. Applying again the subadditivity of entropy to these variables
only yields the trivial upper bound
\[
  H(C_0, \ldots C_{n - 1}) \leq H(C_0) + \ldots + H(C_{n - 1}) = n
  \tilde h(\tau \omega).
\]
Recall that we are explicitly trying to compare the entropy of
$H(t_1R_1+\cdots+t_sR_s)$ to this value, so this bound yields nothing
of interest.

\medskip

Another possible approach is to compute the upper bound \[
  H(C_0, \ldots C_{n - 1}) \leq H(C_0, C_1) + H(C_2, C_3) + \dots + H(C_{n - 3},
  C_{n - 2}) + H(C_{n - 1}).
\] Recall, that we choose $n$ to be an odd prime, so ``splitting up''
the entropy
this way requires us to deal with $H(C_{n-1})$ separately. Note that
for each $1 \leq i < j \leq n-2$, $(C_i,
C_j)$ is distributed over 2 bits, and -- while this is a bit
cumbersome -- we can explicitly compute this
distribution. We will base our analysis on \cite{pacher2016}. This paper is
about the weight distribution of the syndrome under Bernoulli noise, i.e.
the distribution of $R T$ for $T \in \Ft^{s n \times n}$ a parity-check
matrix and $R \leftarrow \Ber(\omega)^{\otimes s n}$. Write
$T_i \in \Ft^{n \times n}$ for the cyclic matrix that represents multiplying
with $t_i(X)$. Then define \[
  T \eqdef
  \left(
    \begin{array}{ccc|ccc|ccc}
      &    & & &        & & &     & \\
      &T_1 & & & \ldots & & & T_s & \\
      &    & & &        & & &     &\\
  \end{array}\right)
\] which one can view as a parity-check matrix. Furthermore, define
$R$ for the concatenation
of $R_1, \ldots, R_s$ \[
  R = (R_1 | \ldots | R_s).
\] Then \[
  R T = t_1(X) R_1(X) + \ldots + t_s(X) R_s(X) = (C_0, \ldots, C_{n - 1}).
\]

It will now be important for us to view elements of $\F_2^n$, or $P_n$, as elements of $\R^n$, where we naturally map $0 \in \F_2$ to $0 \in \R$ and $1 \in \F_2$ to $1 \in \R$. For a vector $v \in \F_2^n$ (respectively, a polynomial $P \in P_n$), we denote by $\hat{v} \in \R^n$ (respectively, $\hat{P} \in \R^n$) the resulting values in Euclidean space. We also use the same notation for matrices.

The distribution of $(C_i, C_j)$ depends on the symmetric matrix
$\Lambda \in \R^{n \times n}$ defined as \[
  \Lambda \eqdef \hat T^\top \hat{T}
\]
Above, we emphasize that the multiplication is defined over the \textbf{real numbers}. Write
$\lambda_{ij}$ for the coefficients of $\Lambda$.
Based on the polynomials $t_1, \ldots, t_s$, the value of
$\lambda_{ij}$ is easy to compute. By definition, $\lambda_{ij}$
is the inner product (over the reals) of the $i$'th and $j$'th row
of the matrix $T$.
Equivalently \[ \lambda_{ij} = \ip{\widehat{X^i \cdot t_1}, \widehat{X^j \cdot t_1}}_\R +
\ldots + \ip{\widehat{X^i \cdot t_s}, \widehat{X^j \cdot t_s}}_\R,
\] where $\ip{\cdot, \cdot}_\R$ is defined as the usual inner product as
vector over the reals, and recall further that the hat notation means that we view the elements as lying in $\{0,1\}^n \subseteq \R^n$. As the multiplication by $X^i$ is just a
shift of the vector by $i$ positions, we can replace two shifts by just one
shift, so
\begin{align}
\lambda_{ij} = \ip{\hat t_1, \widehat{X^{i - j} \cdot t_1}}_\R + \ldots + \ip{\hat t_s, \widehat{X^{i - j}
		\cdot t_s}}_\R. \label{eq:shifts}
\end{align} Notably, the value of $\lambda_{ij}$ only depends on the
\emph{difference} between the coefficients $i - j$.

Alternatively, one can note that for each $\ell \in [s]$,
\[
\ip{\hat t_\ell, \widehat{X^{i - j} \cdot t_\ell}}_\R = |\mathrm{supp}(t_\ell)
\cap \mathrm{supp}(X^{i-j}t_\ell)| \ ,
\]
where for a polynomial $P(X) = \sum_{i=0}^{n-1}P_i X^i \in
\F_2[X]/(X^n-1)$ we have defined $\mathrm{supp}(P) = \{i \in [n] :
P_i =1\}$.

Recall that
$p(\omega) = (1 - 2^{-\omega}) / 2$. Following \cite[Eq. (14)]{pacher2016}, we can obtain the following joint distribution table for each $C_i$ and $C_j$:
\begin{table}[H]
  \begin{tabular}{l|ll}
    $C_i$ \textbackslash{} $C_j$ & 0
    & 1
    \\ \hline
    0                            & $1 - p(\tau \omega) -
    \frac{1}{2}p(2 (\tau - \lambda_{i j}) \omega)$ & $\frac{1}{2} p(2
    (\tau - \lambda_{i j}) \omega)$                  \\
    1                            & $\frac{1}{2} p(2 (\tau -
    \lambda_{i j}) \omega)$                     & $p(\tau \omega) -
    \frac{1}{2} p(2 (\tau - \lambda_{i j}) \omega)$
  \end{tabular}
  \caption{Distribution for $C_i, C_j$}
  \label{table:distr}
\end{table}
To give a sense of how these quantities can be determined, we provide an example computation. Recall that $(C_1,\dots,C_n)$ are determined by i.i.d. $\Ber(\omega)$ random variables $R_{u,t}$ for $u \in [n]$ and $t \in [s]$. Fix sets $S_i$ and $S_j$ such that $C_i = \sum_{(u,t) \in S_i}R_{u,t}$ (read modulo 2) and $C_j = \sum_{(u,t)\in S_j} R_{u,t}$. Then $|S_i| = |S_j| = \tau$ and $S_{ij}:=S_i \cap S_j$ has $|S_{ij}| = \lambda_{ij}$. Denote by $E_{ij}$ the event that $\sum_{(u,t) \in S_{ij}}R_{u,t}\equiv 0 \pmod{2}$, and note that conditioned on $E$ we have that $C_i=0$ iff $\sum_{(u,t)\in S_i \setminus S_j} R_{u,t}=0 \pmod{2}$, and similarly for the probability $C_j=1$. Similarly, conditioned on $E_{ij}=1$ we consider the event that this sum modulo 2 takes on the opposite value. Due to \Cref{lem:pilUp}, we therefore have
\begin{align*}
	\Pr[C_i = 0 \land C_j = 1] &= \Pr[C_i=0 \land C_j=1 | E_{ij}]\Pr[E_{ij}]\\
	&\qquad + \Pr[C_i=0\land C_j=1|\neg E_{ij}]\Pr[\neg E_{ij}]\\
	&= (1-p(\omega(\tau - \lambda_{ij})))p(\omega(t-\lambda_{ij}))(1-p(\omega\lambda_{ij})) \\
	&\qquad+p(\omega(\tau-\lambda_{ij})) (1-p(\omega(\tau-\lambda_{ij})))p(\omega\lambda_{ij}) \\
	&= \frac{1}{2}p(2\omega(\tau-\lambda_{ij})),
\end{align*}
where in the last line we used the identity $p(x)(1-p(x))=\frac12 p(2x)$.

This distribution (\Cref{table:distr}) yields the entropy
\begin{align*}
  H(C_i, C_j)
  = {} & H(C_i) + H(C_j \mid C_i) \\
  = {} & \tilde h(\tau \omega) + p(\tau \omega) H(C_j \mid C_i =
  1) + (1 - p(\tau \omega)) H(C_j \mid C_i = 0) \\
  = {} & \tilde h(\tau \omega) + p(\tau \omega) h\parens{\frac{p(2
  (\tau - \lambda_{i j}) \omega)}{2 p(\tau \omega)}} + (1 - p(\tau
  \omega)) h\parens{\frac{p(2
  (\tau - \lambda_{i j}) \omega)}{2 (1 - p(\tau \omega))}}. \\
  \leq {} &  \tilde h(\tau \omega) + h\parens{p(\tau \omega) \frac{p(2
    (\tau - \lambda_{i j}) \omega)}{2 p(\tau \omega)} + (1 - p(\tau
    \omega)) \frac{p(2
  (\tau - \lambda_{i j}) \omega)}{2 (1 - p(\tau \omega))}}. \\
  = {} & \tilde h(\tau \omega) + \tilde h(2
  (\tau - \lambda_{i j}) \omega) \ ,
\end{align*}
where the inequality applies the fact that $h(\cdot)$ is concave.
Let's compare this to the entropy of $H(C_i) + H(C_j) = 2 \tilde
h(\tau \omega)$ for the idealized case where $C_i$ and $C_j$ are
independent. Note that if
$\lambda_{i j} > \tau / 2$, then the above establishes \[
  H(C_i, C_j) \leq \tilde h (\tau \omega) + \tilde h(2(\tau -
  \lambda_{ij})\omega) < 2
  \tilde h(\tau \omega) = H(C_i) + H(C_j).
\]

Thus, to make thisgap large we would like $\lambda_{ij}$ to be large. Recalling \eqref{eq:shifts}, to have $\lambda_{ij}$ large we need pairs $i,j$
for which shifting by $j-i$ leads to another coefficient vector with
large overlap. We now consider a specific case where we have such large overlap.
\paragraph{Arithmetic progression} If the support of the noise vectors is
sufficiently structured, then we can find cases such that the
$\lambda_{ij}$'s are big. By big we mean
$\lambda_{ij} \approx \tau$. Suppose that the support of the noise vector is an
arithmetic progression: so \[
  \text{supp}(t_k) = \set{a x + b_k \mid x \in \{0, \ldots, \weight{t_k} - 1\}}
\] for all $k \in [s]$. Then \[
  \ip{\hat t_k, \widehat{X^a t_k}}_\R = \abs{\text{supp}(t_k) \cap \text{supp}(X^a
  \, t_k)} \geq
  \weight{t_k} - 1.
\] Thus, for each $i$, we have \[
  \lambda_{i,i + a} = \ip{\hat t_1, \widehat{X^a t_1}}_\R + \ldots + \ip{\hat t_s, \widehat{X^a
  t_s}}_\R \geq \tau
  - s,
\] yielding \[
  H(C_i, C_{i + a}) \leq \tilde h(\tau \omega) + \tilde h(2 s \omega).
\]
If $n$ is odd, and $a$ and $n$ are coprime, then we can upper bound
the total entropy by
\begin{align*}
  H(C_0, \ldots, C_{n - 1})
  \leq{} & H(C_0, C_{a}) + H(C_{2a}, C_{3a}) + \ldots + H(C_{(n - 3)
  a}, C_{(n - 2) a})
  + H(C_{(n - 1) a})
  \\ \leq{} & \frac{n - 1}{2} \parens{\vphantom{\frac{n -
  1}{2}}\tilde h(\tau \omega) + \tilde h(2s \omega)} + \tilde h(\tau \omega)
\end{align*}
where the indices $i$ in $C_i$ are read modulo $n$, and using that
$0, a, \ldots,
(n - 1) a$ are distinct values mod $n$ when $a$ and $n$ are coprime.

In total using \cref{lem:divDependence} we get
\begin{align*}
  D(t_1 R_1 + \ldots + t_s R_s \| \Ber(\tau \omega)^{\otimes n})
  = {} & H(\Ber(\tau \omega)^{\otimes n}) - H(C_0, \ldots, C_{n - 1})
  \\ \geq {} & \frac{n - 1}{2} \parens{\vphantom{\frac{n - 1}{2}}
  \tilde h(\tau \omega) - \tilde h(2s \omega)}.
\end{align*}
Using the approximation for $\tilde h(\omega)$, we can characterize the
divergence asymptotically.

\begin{corollary}
  Let $\omega \leq \log(n)$, and $\tau \geq 2 s$, and $t_1(X), \ldots,
  t_s(X)$ be vectors that have an arithmetic progression with the same common
  difference. Then the divergence to $I \eqdef \Ber(\tau \omega)^{\otimes n}$
  can be bounded to the range \[
    D(t_1 R_1 + \ldots + t_s R_s, I) = \Theta(n 2^{-4 s \omega}).
  \] Notably, this value is negligible when $s \omega \geq \log(n)^2$.
\end{corollary}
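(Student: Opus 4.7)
The plan is to directly substitute the Taylor approximation for $\tilde h$ into the lower bound on divergence that was just derived. First I would recall that the preceding discussion established
\[
  D(t_1 R_1 + \ldots + t_s R_s \,\|\, I) \;\ge\; \tfrac{n-1}{2}\bigl(\tilde h(\tau\omega) - \tilde h(2s\omega)\bigr),
\]
as a consequence of \Cref{lem:divDependence}, the pairwise bound $H(C_i, C_{i+a}) \le \tilde h(\tau\omega) + \tilde h(2s\omega)$ (valid because $\lambda_{i,i+a} \ge \tau - s$ when the $t_k$ have arithmetic-progression supports with the same common difference $a$ coprime to $n$), and subadditivity of entropy applied to the $(n-1)/2$ disjoint pairs together with one singleton coordinate.

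Next I would apply the approximation $\tilde h(\omega) = 1 - \Theta(2^{-2\omega})$ from \Cref{lem:approxh} to both summands, yielding
\[
  \tilde h(\tau\omega) - \tilde h(2s\omega) = \Theta(2^{-4s\omega}) - \Theta(2^{-2\tau\omega}).
\]
The assumption $\tau \ge 2s$ guarantees $2^{-2\tau\omega} \le 2^{-4s\omega}$, so the positive contribution of order $2^{-4s\omega}$ dominates. Multiplying by $(n-1)/2 = \Theta(n)$ then gives $D \ge \Omega(n \cdot 2^{-4s\omega})$, which matches the $\Theta$ in the corollary statement; the matching asymptotic upper bound on $D$ (equivalently, a lower bound on the joint entropy) can be read off from the fact that the pairwise entropy inequality we used is essentially tight precisely when each $\lambda_{i,i+a}$ saturates at its maximum $\tau-s$, which is exactly what happens in the arithmetic-progression regime.

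To finish, the ``negligible'' claim follows by plugging $s\omega \ge \log^2 n$ into $n \cdot 2^{-4s\omega}$: this yields $n \cdot 2^{-4\log^2 n} = n^{1 - 4\log n}$, which decays faster than any inverse polynomial in $n$ and is therefore negligible.

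The main obstacle I anticipate is the boundary case $\tau = 2s$: there the $2^{-2\tau\omega}$ and $2^{-4s\omega}$ terms have the same order and the leading coefficients of the Taylor expansion cancel exactly, so the naive argument degenerates and only gives $o(n \cdot 2^{-4s\omega})$. The cleanest resolutions are either to strengthen the hypothesis to $\tau \ge 2s+1$ so that the two exponents are separated by at least $2\omega$, or to carry one additional term of the Taylor series (as done explicitly in the proof of \Cref{lem:approxh}) so that the next-order contribution witnesses the $\Theta$ on the nose.
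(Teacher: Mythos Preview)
Your approach is essentially the same as the paper's: invoke the previously derived lower bound $D \ge \frac{n-1}{2}\bigl(\tilde h(\tau\omega)-\tilde h(2s\omega)\bigr)$, apply the Taylor approximation from \Cref{lem:approxh}, and use $\tau\ge 2s$ to identify the dominant term. The paper's proof is in fact terser than yours and stops after that lower-bound computation.

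Two remarks on the extra material you supply. First, your justification for the matching \emph{upper} bound on $D$ is not correct: tightness of the pairwise inequality $H(C_i,C_{i+a})\le \tilde h(\tau\omega)+\tilde h(2s\omega)$ does not help, because subadditivity only upper-bounds the joint entropy $H(C_0,\dots,C_{n-1})$, whereas an upper bound on $D=H(I)-H(C_0,\dots,C_{n-1})$ requires a \emph{lower} bound on that joint entropy. The paper itself writes $\Theta$ but its displayed argument only establishes the $\Omega$ direction, so this gap is shared rather than yours alone.

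Second, your observation about the boundary case $\tau=2s$ is valid and in fact sharper than you state: when $\tau=2s$ the quantity $\tilde h(\tau\omega)-\tilde h(2s\omega)$ is \emph{identically zero}, not merely zero to leading Taylor order, so carrying further terms of the expansion cannot rescue the bound. Of your two proposed fixes, only the first (strengthening to $\tau\ge 2s+1$) works; the second does not.
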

\begin{proof}
    The proof is a generalization of \cref{cor:tRNegl}. Using the approximation
    $\tilde h = 1 - 2^{-2 \omega} + \O(2^{-4\omega})$ (see
    \cref{lem:approxh}), we know that \begin{align*}
        D(t_1 R_1 + \ldots + t_s R_s, I)
           & =    H(I) - H(t_1 R_1 + \ldots + t_s R_s)
        \\ & \geq \frac{n - 1}{2} (\tilde h(\tau \omega) - \tilde h(2 s \omega))
        \\ & =    \frac{n - 1}{2} \parens{\Theta(2^{-4 s \omega}) - \Theta(2^{-\tau \omega})}
        \\ & =    \Theta(n 2^{-4 s \omega}),
    \end{align*}
    as required.
\end{proof}

\paragraph{Statistical distance} Similar to the case $s = 1$ (see
\cref{subsec:stat-dist}), we can use the reverse Pinsker inequality to relate
the KL-divergence to the statistical distance. To achieve the bound we proof the
generalization of \cref{thm:noiseStatDistLower}.

\begin{theorem}  \label{thm:noiseStatDistLowerGeneric}
  Let $t_1, \ldots, t_s \in \Ft[X]/(X^n - 1)$ polynomials span the entire
  space, i.e. $\langle t_1, \ldots, t_s\rangle = \Ft[X]/(X^n - 1)$,
  and $R_1, \ldots, R_s \leftarrow
  \Ber(\omega)^{\otimes n}$ i.i.d. Furthermore, let $I~\leftarrow~\Ber(\tau
  \omega)^{\otimes n}$. For $n \geq 3$ and $\omega \geq \log(n)$, the
  statistical distance has a lower bound \[
    \statdist(t_1 R_1 + \ldots + t_s R_s \| I) \geq
    1 /(3s) \cdot D(t_1 R_1 + \cdots + t_sR_s \| I).
  \]
\end{theorem}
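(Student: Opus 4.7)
My plan is to derive this from the reverse Pinsker inequality (\Cref{thm:RevPinsker}) applied to $P$, the distribution of $t_1 R_1 + \cdots + t_s R_s$, and $Q := I \sim \Ber(\tau\omega)^{\otimes n}$. Setting $f(x) := x \log x /(x-1)$, $m := \min_x P(x)/Q(x)$, and $M := \max_x P(x)/Q(x)$, the inequality reads
\[
  D(P\|Q) \leq \statdist(P,Q)\cdot \bigl(f(m) + f(M)\bigr),
\]
so the task reduces to showing $f(m) + f(M) \leq 3s$ whenever $n \geq 3$ and $\omega \geq \log n$.

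The first step is to bound $m$ and $M$ by bounding $P(x)$ and $Q(x)$ separately. Since $Q$ is a product of $n$ independent $\Ber(\tau\omega)$ variables, a direct calculation gives $Q(x) \in [2^{-n}(1-2^{-\tau\omega})^n,\, 2^{-n}(1+2^{-\tau\omega})^n]$. For $P$, the spanning hypothesis $\langle t_1,\ldots,t_s\rangle = \Pn$ ensures that the $\F_2$-linear map $\phi : \Pn^s \to \Pn$ defined by $(r_1,\ldots,r_s) \mapsto \sum_i t_i r_i$ is surjective, so each fiber $\phi^{-1}(x)$ has exactly $2^{(s-1)n}$ preimages. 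Writing $P(x) = \sum_{(r_i) \in \phi^{-1}(x)} \prod_{i,j}\Pr[R_{i,j}=r_{i,j}]$ and bounding each of the $sn$ Bernoulli factors by $(1 \pm 2^{-\omega})/2$ yields $P(x) \in [2^{-n}(1-2^{-\omega})^{sn},\, 2^{-n}(1+2^{-\omega})^{sn}]$. Combining these,
\[
  M \leq \frac{(1+2^{-\omega})^{sn}}{(1-2^{-\tau\omega})^n}, \qquad \frac{1}{m} \leq \frac{(1+2^{-\tau\omega})^n}{(1-2^{-\omega})^{sn}}.
\]

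Next, $\omega \geq \log n$ gives $2^{-\omega} \leq 1/n$ and $2^{-\tau\omega} \leq 1/n$. Together with $n \geq 3$ and the elementary estimates $(1+x)^N \leq e^{Nx}$ and $(1-x)^N \geq \exp(-Nx/(1-x))$, this bounds both $M$ and $1/m$ by $e^{O(s)}$, so $\log M, \log(1/m) = O(s)$. One can check that $f$ is monotone increasing on $(0,\infty)$ with $f(1) = 1/\ln 2$ and satisfies $f(x) \leq 2\log x$ for $x \geq 2$; hence $f(m) \leq 1/\ln 2$ for $m \in (0,1]$ while $f(M) \leq 2\max(1,\log M)$. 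Summing yields $f(m)+f(M) = O(s)$, and tracking constants carefully gives the target bound $\leq 3s$.

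The main obstacle is that the naive estimates above give $f(m) + f(M) \leq \alpha s + \beta$ for absolute constants $\alpha, \beta$, which suffices comfortably for large $s$ but is slightly loose at $s = 1$. Tightening to exactly $3s$ likely requires sharper estimates on $f$ — for instance, $f(x) \leq (x+1)/(2\ln 2)$, which is sharp at $x = 1$ and thus useful in the regime where $m, M$ are close to $1$ — combined with the structural observation that for $s = 1$ the spanning hypothesis forces $t$ to be a unit in $\Pn$, so the fiber $\phi^{-1}(x)$ becomes a singleton and the bounds on $P(x)$ are attained exactly rather than as ranges. I expect this bookkeeping to yield the claimed constant $3s$ uniformly.
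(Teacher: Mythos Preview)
Your overall strategy coincides with the paper's: apply the reverse Pinsker inequality (\Cref{thm:RevPinsker}), bound the ratio $P(x)/Q(x)$ by counting the $2^{(s-1)n}$ preimages of the surjective map $(r_1,\dots,r_s)\mapsto\sum_i t_i r_i$ and estimating each Bernoulli factor, and then control the resulting expression in $m,M$. So the plan is right; the gap is only in the constant-chasing, and the paper closes it with a trick you do not have.

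First a small correction. With your $f(x)=x\log x/(x-1)$, the reverse Pinsker factor is
\[
\frac{m\log m}{1-m}+\frac{M\log M}{M-1}=f(M)-f(m),
\]
not $f(m)+f(M)$: for $m\in(0,1)$ the first summand is negative. This only helps you, but it is essential for what follows.

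The paper avoids your exponential estimates $(1+x)^N\le e^{Nx}$ altogether. After reaching the same raw bounds on $M$ and $1/m$ that you wrote, it first relaxes the denominator exponents from $n$ to $sn$ and uses $\tau\omega\ge\omega$ to obtain the \emph{symmetric} bounds
\[
m\ge u_n^{-s},\qquad M\le u_n^{s},\qquad\text{where }u_n:=\Bigl(\tfrac{1+1/n}{1-1/n}\Bigr)^{n}.
\]
A short calculus argument shows $u_n$ is decreasing in $n$ with $u_3=8$, so $m\ge 8^{-s}$ and $M\le 8^s$. Since $f$ is increasing on $(0,\infty)$, one may substitute these endpoints, and then an exact algebraic identity does the rest: for any $a>1$,
\[
f(a)-f(a^{-1})=\frac{a\log a}{a-1}-\frac{a^{-1}\log(a^{-1})}{a^{-1}-1}
=\frac{a\log a}{a-1}-\frac{\log a}{a-1}=\log a.
\]
Taking $a=8^s$ gives $f(M)-f(m)\le\log(8^s)=3s$ on the nose, uniformly in $s\ge 1$ and $n\ge 3$.

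So your proposed fixes---sharper tangent-line bounds on $f$, or special-casing $s=1$ via invertibility of $t$---are unnecessary. The missing idea is simply to arrange the bounds so that $m$ and $M$ are controlled by \emph{reciprocal} quantities, after which the Pinsker factor collapses to a logarithm.
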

\begin{proof}

  In the proof we bound $\Pr[t_1 R_1 + \ldots + t_s R_s = x] / \Pr[I = x] \in
  [1/8^s, 8^s]$. Then, we apply \cref{thm:RevPinsker}. We can achieve these
  bounds, finding lower/upper bound on the enumerator and the denominator
  separately. Both probabilities can be bounded using the distributions
  $\Ber(\omega)$, and $\Ber(\weight{t} \omega)$. For any $x \in \Ft[X]/(X^n -
  1)$ there are at exactly $2^{sn - n}$ possible tuples $(r_1, \ldots, r_s)$
  such that $t_1 R_1 + \ldots + t_s R_s = x$. Then, we can bound the
  probability of each tuple $\Pr[(R_1, \ldots, R_s) = (r_1, \ldots, r_s)] \in
  [((1 - 2^{-\omega})/2)^{s n}, ((1 + 2^{-\omega})/2)^{s n}]$.
  \begin{align*}
    \frac{\Pr[t_1 R_1 + \ldots + t_s R_s = x]}{\Pr[I = x]}
    & \leq    \frac{\max_{x \in \Pn} \Pr[t_1 R_1 + \ldots + t_s R_s =
    x]}{\min_{x \in \Pn} \Pr[I = x]}
    \\ & \leq \frac{2^{s n - n} ((1 + 2^{-\omega})/2)^{sn}}{((1 -
    2^{-\weight{t} \omega})/2)^n}
    \\ & =    \frac{(1 + 2^{-\omega})^{sn}}{(1 - 2^{-\weight{t} \omega})^n}
    \\ & \leq    \frac{(1 + 2^{-\omega})^{sn}}{(1 - 2^{-\weight{t}
    \omega})^{sn}}
    \\ & \leq   \frac{(1 + 2^{-\omega})^{sn}}{(1 - 2^{-\omega})^{sn}}
    && \text{(using $\norm{t} \geq 1$)}
    \\ & \leq \parens{\frac{1 + 1/n}{1 - 1/n}}^{ns} && \text{(using
    $\omega \geq \log(n)$)},
  \end{align*}
  where we use that $\omega \geq \log(n)$. Define $u_n \eqdef \frac{(1 +
  1/n)^n}{(1 - 1/n)^n}$. The upper bound is to this ratio is equal to
  $(u_n)^s$. We claim
  that $u_n$ is decreasing, and $u_3 \leq 8$, which yields an upper bound
  of $8^s$.

  To prove this claim we compute the derivative of $u_n$. As $\ln(\cdot)$
  is monotonically increasing, if $f(x) = \ln\parens{\parens{\frac{1 +
  1/x}{1 - 1/x}}^x}$ has a negative derivative, then $u_n$ is indeed
  decreasing. For $x > 1$ we can give an upper bound for derivative by \[
    f'(x) = \ln\parens{\frac{x + 1}{x - 1}} -\frac{2x}{x^2 - 1} <
    \ln\parens{1 + \frac{1}{x}} - \frac{2}{x} \leq \frac{1}{x} - \frac{2}{x}
  = - \frac{1}{x} < 0 \] which is indeed negative.

  Using that $u_n$ is decreasing, filling in a value for $n = 3$ gives
  $u_3 = 8$ so \[
    \frac{\Pr[t_1 R_1 + \ldots + t_s R_s = x]}{\Pr[I = x]}
    \leq (u_n)^s \leq (u_3)^s = 8^s.
  \] Notably, the upper bound on this ratio independent of $n$.

  For a lower bound on the probability ratio we can apply a
  similar approach
  \begin{align*}
    \frac{\Pr[t_1 R_1 + \ldots + t_s R_s = x]}{\Pr[I = x]}
    & \geq \frac{\min_{x \in P_s} \Pr[t_1 R_1 + \ldots + t_s R_s =
    x]}{\max_{x \in \Pn} \Pr[I = x]}
    \\ & \geq  \frac{2^{sn - n} ((1 - 2^{-\omega})/2)^{s n}}{((1 +
    2^{-\weight{t} \omega})/2)^n}
    \\ & =     \frac{(1 - 2^{-\omega})^{sn}}{((1 + 2^{-\weight{t} \omega}))^n}
    \\ & \geq  \frac{((1 - 2^{-\omega}))^{sn}}{((1 + 2^{-\weight{t}
    \omega}))^{sn}}
    \\ & \geq \parens{\frac{\parens{1 - 1/n}^{n}}{\parens{1 +
    1/n}^n}}^s && \text{(using $\omega \geq \log(n)$)}
    \\ & = (1/u_n)^s
    \\ & \geq (1/8)^s
  \end{align*}

  Where we use that $1/u_n$ is monotonically increasing. In total, we
  achieve the required bounds \[
    (1/8)^s \leq \frac{\Pr[t_1 R_1 + \ldots + t_s R_s = x]}{\Pr[I =
    x]} \leq 8^s.
  \]

  To finish the proof we need to apply \cref{thm:RevPinsker}. Note
  that for $x > 0$ \[
    \frac{d}{dx} \parens{\frac{x \log(x)}{x - 1}} = \ln(2) \cdot \frac{x
    - \log(x) - 1}{(x - 1)^2} \geq 0
  \]

  Because the derivative is strictly positive, $x \log(x) /(x - 1)$ is
  monotonically increasing. On the other hand $x \log(x) /(1 - x)$ is
  monotonically decreasing. Therefore, using our bound $M \leq 8$, and $m
  \geq 1/8$, we can give an upper bound on quantity from
  \cref{thm:RevPinsker}
  \begin{align*}
    \frac{m \log(m)}{1 - m} + \frac{M \log(M)}{M - 1}
    & \leq \frac{(1/8^s) \cdot \log(1/8^s)}{1 - (1/8^s)} + \frac{8^s
    \log(8^s)}{8^s - 1}
    \\ & = 3 s \Bigl(\dfrac{8^{s}}{8^{s}-1}-\dfrac{8^{-s}}{1-8^{-s}}\Bigr)
    \\ & = 3s.
  \end{align*}
  We can conclude by \cref{thm:RevPinsker} that \[
    D(t_1 R_1 + \ldots + t_s R_s \| I) \leq 3s \cdot \statdist(t_1
    R_1 + \ldots + t_s R_s, I),
  \] so \[
    \statdist(t_1 R_1 + \ldots + t_s R_s, I) \geq 1/(3s) \cdot D(t_1
    R_1 + \ldots + t_s R_s \| I).
  \] The theorem follows.
\end{proof}

 \bmhead{Acknowledgements} The authors would like to thank Jeroen
Zuiddam and Christian Schaffner for helpful feedback. MB's work was
supported by NWO Gravitation Project QSC. NR is supported by an NWO
(Dutch Research Council) grant with number C.2324.0590. This work was
done in part while NR was visiting the Simons Institute for the Theory
of Computing, supported by DOE grant \# DE-SC0024124.

\printbibliography

\end{document}